\numberwithin{equation}{section}
\newtheorem{proposition}{Proposition}
\newtheorem{remark}{Remark}
\setlist{nosep}
\title{PAS-Net: \underline{P}hysics-informed \underline{A}daptive \underline{S}cale Deep Operator \underline{Net}work}
\author{
  Changhong Mou \\
  Department of Mathematics and Statistics\\ Utah State University \\
  900 Old Main Hill, Logan, UT 84322, USA \\
  \texttt{changhong.mou@usu.edu} 
  \And
  Yeyu Zhang \\
  School of Mathematics\\ 
  Shanghai University of Finance and Economics\\ No.777 Guoding Road, Shanghai 200433, China\\
  \texttt{zhangyeyu@mail.shufe.edu.cn} 
\And
  Xuewen Zhu \\
  School of Mathematics\\ 
  Shanghai University of Finance and Economics\\ No.777 Guoding Road, Shanghai 200433, China\\
  \texttt{zhuxuewen@stu.sufe.edu.cn} 
  \And
 Qiao Zhuang\\
 School of Science and Engineering \\ 
 University of Missouri-Kansas City\\
 5110 Rockhill Rd, Kansas City, MO 64110, USA \\
   \texttt{qzhuang@umkc.edu} 
 }
\begin{document}

\maketitle

\begin{abstract}
Nonlinear physical phenomena often show complex multiscale interactions;  motivated by the principles of multiscale modeling in scientific computing,  we propose \emph{PAS-Net}, a physics-informed Adaptive-Scale Deep Operator Network for learning solution operators of nonlinear and singularly perturbed evolution PDEs with small parameters and localized features. Specifically, PAS-Net augments the trunk input in the physics informed Deep Operator Network (PI-DeepONet) with a  prescribed  (or learnable) locally rescaled coordinate transformation centered at reference points. This addition introduces a multiscale feature embedding that acts as an architecture-independent preconditioner  which improves the representation of localized, stiff, and multiscale dynamics.
From an optimization perspective, the adaptive-scale embedding in PAS-Net modifies the geometry of the Neural Tangent Kernel (NTK) associated with the neural network by increasing its smallest eigenvalue, which in turn improves spectral conditioning and accelerates gradient-based convergence.
We further show that this adaptive-scale mechanism explicitly accelerates neural network training in approximating functions with steep transitions and strong asymptotic behavior, and we provide a rigorous proof of this function-approximation result within the finite-dimensional NTK matrix framework.
We test the proposed PAS-Net on three different problems: (i) the one-dimensional viscous Burgers equation, (ii) a nonlinear diffusion–reaction system with sharp spatial gradients, and (iii) a two-dimensional eikonal equation. The numerical results show that PAS-Net consistently achieves higher accuracy and faster convergence than the standard DeepONet and PI-DeepONet models under a similar training cost.
\end{abstract}
\section{Introduction}
% \cm{add later}

%\textbf{state-of-the-art:} Multiscale NN, Multiscale OPL.

The rapid development of \emph{scientific machine learning} (SciML) has opened new directions for data-driven modeling, analysis, and prediction of complex physical systems governed by partial differential equations (PDEs) \cite{cuomo2022scientific,lennon2023scientific,dou2023machine,sanderse2024scientific,chen2024learning}. 
In recent years, two major directions have become popular within SciML for solving PDEs: \emph{Physics-Informed Neural Networks} \cite{raissi2019physics,cai2021physics,lu2025mopinnenkf}(PINNs) and \emph{operator learning} \cite{lu2021learning,li2021fourier,wang2021learning}. 
The first category incorporates the governing PDE constraints directly into the loss function by penalizing the residuals of the differential operators at a set of collocation points. 
This residual-based formulation, often referred to as the \emph{PDE residual loss}, enforces the underlying physical laws during training and enables the network to infer continuous solutions even from sparse or noisy observations. 
Such a physics-informed regularization is particularly advantageous in \emph{inverse problems}, where unknown physical parameters or boundary conditions are estimated simultaneously with the PDE solution, without the need for large labeled datasets.
In contrast, operator learning aims to approximate mappings between infinite-dimensional function spaces, enabling fast surrogate modeling and real-time simulation of parameterized PDEs. 
Representative frameworks include the Deep Operator Network (DeepONet)~\cite{lu2021learning} and the Fourier Neural Operator (FNO)~\cite{li2021fourier}. 
DeepONet decomposes a nonlinear operator into two subnetworks: a \emph{branch network} that encodes the input function via its pointwise sensor evaluations, and a \emph{trunk network} that represents the coordinates of the output field. 
The outputs of these subnetworks are combined through an inner product to approximate the target solution operator, achieving universal approximation in Banach spaces~\cite{chen1995universal}. 
The FNO, on the other hand, works in the frequency domain by applying spectral convolution in Fourier space, which allows efficient capture of long-range dependencies and fast inference on unseen spatial resolutions through the inverse fast Fourier transform (IFFT).  
Building upon these two approaches, the physics-Informed Deep Operator Network (PI-DeepONet)~\cite{wang2021learning} incorporate PDE residual losses into the operator-learning framework, which enables simultaneous data-driven and physics-constrained learning of nonlinear PDE solution operators.

Despite these advances, operator learning still faces fundamental challenges in handling \emph{singularly perturbed} and \emph{multiscale} PDEs, which involve strong scale interactions and stiffness that may hinder accurate and efficient learning.
Such equations often involve small parameters that induce multiple spatial or temporal scales, sharp gradients, or boundary and internal layers~\cite{goering1984singularly,kadalbajoo2010brief}. 
Traditional numerical methods, such as finite element or finite difference schemes, must employ adaptive meshes,  solution space enrichment, or homogenization strategies to resolve these localized structures~\cite{efendiev2009multiscale,hughes1998variational}. 
Analogously, neural operator models trained on uniformly sampled data may suffer from loss of resolution or biased approximation in regions where the solution varies rapidly \cite{bartolucci2023representation,roy2025physics}. 
Another important challenge lies in the \emph{optimization and convergence} of neural networks when learning solutions of PDEs with limited regularity or strongly varying scales. 
For problems that exhibit steep transitions, sharp layers, or stiff dynamics, standard neural architectures often suffer from slow or unstable convergence \cite{urban2025unveiling,gu2021selectnet}, largely due to ill-conditioned loss landscapes and unfavorable spectral properties of the Neural Tangent Kernel (NTK)~\cite{jacot2018ntk,lee2019wide}. 
These difficulties highlight the need for architectural mechanisms that can adapt to local scales and geometric features in the solution space, thereby improving both representational efficiency and optimization stability.

Inspired by multiscale modeling principles in scientific computing \cite{jin1999efficient,hou1997multiscale,weinan2011principles,hughes1998variational}, 
we propose the \emph{Physics-Informed Adaptive-Scale Deep Operator Network} (PAS-Net) which is a new framework for learning nonlinear operator mappings in partial differential equations (PDEs) that exhibit small-scale parameters, localized structures, or singular perturbations. Our approach also aligns with with the recent progress in multiscale deep neural network (DNN) i.e., including scaling in input features or hidden layers \cite{zhuang2024two,LiXuZhang20,LiXuZhang23}. 
% \qz{and some recent multiscale DNN approaches that involve scaling in input features \cite{zhuang2024two} or in the first hidden layer \cite{LiXuZhang20, LiXuZhang23}  of an MLP}, 
Unlike conventional neural operator architectures, PAS-Net embeds an adaptive scale feature directly into the trunk network of the Physics-Informed Deep Operator Network (PI-DeepONet). 
Specifically, the adaptive-scale feature augments the input coordinates with locally rescaled transformations centered around reference points, where both the scaling factors and reference locations can be either prescribed or learned. 
This embedding enables the network to resolve localized structures across varying spatial scales. 
Consequently, the adaptive-scale formulation provides an intrinsic multiscale feature representation that acts as a model-independent preconditioner, enhancing both expressivity and training stability when addressing stiff or highly localized dynamics. 
On the theoretical side, we analyze how the adaptive-scale embedding influences the Neural Tangent Kernel (NTK) \cite{jacot2018neural,bietti2019inductive} and show that it increases the smallest eigenvalue, thereby improving spectral conditioning and accelerating convergence.  
This theoretical insight is consistent with both empirical and analytical evidence, including function-approximation experiments and a rigorous proof within the finite-dimensional NTK matrix framework that adaptive scaling leads to faster and more stable training for functions with steep gradients and strong asymptotic behavior.
The main contributions of this work are summarized as follows:
\begin{itemize}[leftmargin=2em]
  \item We introduce a new \emph{adaptive-scale embedding} within the Physics-Informed Deep Operator Network (PI-DeepONet) framework which enables localized rescaling of spatial coordinates around reference points. This design provides an intrinsic multiscale representation that enhances the network’s ability to capture sharp gradients, boundary layers, and stiff dynamics in singularly perturbed PDEs. 

  \item  We establish a theoretical link between adaptive-scale embeddings and the Neural Tangent Kernel (NTK), showing that the proposed mechanism increases the smallest eigenvalue of the NTK, thereby improving spectral conditioning and accelerating gradient-based convergence. 

  \item  Through extensive experiments on benchmark test problems including the Burgers, advection, and diffusion–reaction equations, we show that PAS-Net consistently achieves higher accuracy and faster convergence than standard DeepONet and PI-DeepONet models, particularly for problems with steep transitions and multiscale features.
\end{itemize}

The remainder of this paper is organized as follows. 
Section~\ref{sec:pas-net} introduces the proposed PAS-Net framework includes the physics-informed formulation, adaptive-scale embedding, and theoretical analysis based on the Neural Tangent Kernel (NTK). 
Section~\ref{sec:numeric} presents numerical experiments on Burgers, diffusion–reaction equations, and 2D eikonal equation.
Finally, Section~\ref{sec:conclusion} summarizes the main findings and discusses potential directions for future research.

% PDEs:
% \begin{equation}\label{eq:equation-of-interest}
% %m\partial_t^2u + 
% \partial_t u +\mathcal{L}_\epsilon (u) +  \mathcal{N}(u) = f,\quad  \mathbf{x}\in D\subset\mathbb{R}^d,~{ t\in (0, T ], }
% \end{equation}
% where $\mathcal{L}_\epsilon(\cdot)$ is a linear operator includes a small parameter, $\mathcal{N}(\cdot)$ is a nonlinear operator. 

% PDE:
% \begin{equation}\label{eq:equation-of-interest_ori}
% %m\partial_t^2u + 
% {p}\partial_t u -\epsilon L_2u +  L_0u = f,\quad  \mathbf{x}\in D\subset\mathbb{R}^d,~{ t\in (0, T ], }
% \end{equation}
% {\color{red} Question: What is $p$? Will $L_0$ include the nonlinear term? If so, can we use the form:
% \begin{equation}\label{eq:equation-of-interest}
% %m\partial_t^2u + 
% \partial_t u +\mathcal{L}_\epsilon (u) +  \mathcal{N}(u) = f,\quad  \mathbf{x}\in D\subset\mathbb{R}^d,~{ t\in (0, T ], }
% \end{equation}
% where $\mathcal{L}_\epsilon(\cdot)$ is a linear operator includes a small parameter, $\mathcal{N}(\cdot)$ is a nonlinear operator. 
% }\\
% {\color{blue} \eqref{eq:equation-of-interest} is fine. \eqref{eq:equation-of-interest_ori} emphasize $\epsilon$ associated with the second-order term.}
\section{PAS-Net \label{sec:pas-net}}
\subsection{Partial differential equations with small parameters}
Let $D \subset \mathbb{R}^d$ ($d \in \{1,2,3\}$) denote a bounded Lipschitz domain with boundary $\partial D$ and outward unit normal vector $\mathbf{n}$. We consider the following nonlinear, possibly singularly perturbed evolution partial differential equation (PDE) \cite{goering1984singularly,kadalbajoo2010brief,evans2022partial}:
\begin{equation}\label{eq:equation-of-interest}
\partial_t u + \mathcal{L}_\varepsilon(u) + \mathcal{N}(u) = f,
\qquad (\mathbf{x},t) \in D \times (0,T],
\end{equation}
subject to the initial condition
\begin{align}
    u(\mathbf{x},0) = u_0(\mathbf{x}), \qquad \mathbf{x} \in D,
\end{align}
and with appropriate boundary conditions. 
Here $\varepsilon \in (0,1]$ is a (possibly small) dimensionless parameter that may represent diffusion strength, regularization, or scale separation. The function $u : D \times [0,T] \to \mathbb{R}$ is the unknown state variable (e.g., velocity potential, concentration, or temperature), and $f : D \times (0,T] \to \mathbb{R}$ denotes a prescribed source or forcing term.
The operator $\mathcal{L}_\varepsilon$ is a linear differential operator that may depend on the small parameter $\varepsilon$, typically of elliptic type, while $\mathcal{N}$ is a nonlinear operator accounting for advection, reaction, or other nonlinear effects. Equation \eqref{eq:equation-of-interest} therefore represents a general class of nonlinear PDEs.
\subsection{Physics-informed Deep Operator Network}
The general deep operator network (DeepONet) proposed \cite{lu2021learning} with mathematical foundation of the universal operator approximation theorem \cite{chen1995universal} is intended to learn nonlinear operator mappings between infinite-dimensional function spaces.
For a general PDE problem, we introduce the solution operator  
\begin{align}
\mathcal{G}(\kappa) = u(\cdot;\kappa),    
\end{align}
where $\kappa$ represents the input which can the initial condition or parameters of the PDE. A DeepONet is usually a class of neural network models designed to approximate $\mathcal{G}$.
In particular, we denote by $G_\theta$ the DeepONet approximation of $\mathcal{G}$, with $\theta$ representing the trainable parameters of the network. Then a DeepONet usually consists of two subnetworks: the branch net and the trunk net and the estimator evaluated at $\mathbf{x} \in M$ is an inner product of the branch and trunk outputs:
\begin{align}
G_\theta(\kappa(\Xi))(\mathbf{x}) = \sum_{k=1}^p b_k(\kappa(\xi_1), \kappa(\xi_2), \ldots, \kappa(\xi_m)) \, t_k(\mathbf{x}) + b_0,    
\end{align}
where $b_0 \in \mathbb{R}$ is a bias, $\{b_1, b_2, \ldots, b_p\}$ are the $p$ outputs of the branch net, and $\{t_1, t_2, \ldots, t_p\}$ are the $p$ outputs of the trunk net. 
More specifically, the branch network represents $\kappa$ in a discrete format: 
\begin{align}
\kappa(\Xi) = \{\kappa(\xi_1), \kappa(\xi_2), \ldots, \kappa(\xi_m)\}, 
\end{align}
where $\kappa(\Xi)$ is a vector which consist the evaluation of input functions at different sensors $\Xi = \{\xi_1, \xi_2, \ldots, \xi_m\}$.
The trunk network  may take as input a spatial location $\mathbf{x} \in M$. 
Our goal is to predict the solution corresponding to a given $\kappa$, which is also evaluated at the same input location $\mathbf{x} \in M$. 
The representation of $\kappa$ through pointwise evaluations at arbitrary sensor locations provides additional flexibility, both during training and in prediction, particularly when $\kappa$ is available only through its values at these sensor points.

Given the training data with labels in the two sets of inputs and outputs:
\begin{align}
&  \text{input:}  &\Big\{\kappa^{(k)}, \Xi, (\mathbf{x}_i^{(k)})_{i=1,\ldots,N}\Big\}_{k=1,\ldots,N_{\text{data}}}
    \\
 &   \text{output:}
&\{u(\mathbf{x}_i^{(k)}; \kappa^{(k)})\}_{i=1,\ldots,N,\; k=1,\ldots,N_{\text{data}}}.
\end{align}
we minimize a loss function that measures the discrepancy between the true solution operator $G(\kappa^{(k)})$ and its DeepONet approximation $G_\theta(\kappa^{(k)})$:  
\begin{align}
\widehat{\mathcal{L}}_{\text{data}}(\theta) 
= \frac{1}{N_{\text{data}}N} \sum_{k=1}^{N_{\text{data}}} \sum_{i=1}^N 
\Big| G_\theta(\kappa^{(k)}(\Xi))(\mathbf{x}_i^{(k)}) 
     - G(\kappa^{(k)})(\mathbf{x}_i^{(k)}) \Big|^2,
\end{align}
where $X^{(k)} = \{\mathbf{x}_1^{(k)}, \ldots, \mathbf{x}_N^{(k)}\}$ denotes the set of $N$ evaluation points in the domain of $G(\kappa^{(k)})$.  
In practice, however, the analytical solution of the PDE is not available. Instead, we rely on numerical simulation obtained from high fidelity numerical solvers which can be denoted as: 
\begin{align}
\big\{\hat{u}(\mathbf{x}_i^{(k)}; \kappa^{(k)}) \big\}_{i=1,\ldots,N,\; k=1,\ldots,N_{\text{data}}}.    
\end{align}
Then the practical training tends to minimize the loss function  
\begin{align}
{\mathcal{L}}_{\text{data}}(\theta) 
= \frac{1}{N_{\text{data}}N} \sum_{k=1}^{N_{\text{data}}} \sum_{i=1}^N 
\Big| G_\theta(\kappa^{(k)}(\Xi))(\mathbf{x}_i^{(k)}) 
     - \hat{u}(\mathbf{x}_i^{(k)}; \kappa^{(k)}) \Big|^2.
\end{align}

The Physics-Informed Deep Operator Network (PI-DeepONet) \cite{wang2021learning} integrates the idea of physics-informed neural networks (PINN) \cite{raissi2019physics} with the DeepONet framework. 
The PI-DeepONet introduces the additional PDE residual loss function and boundary condition loss function.
Consequently, the loss function for PI-DeepONet is defined as
\begin{align}
\mathcal{L}(\theta) 
= w_{\text{data}} \mathcal{L}_{\text{data}}(\theta) 
+ w_{\text{PDE}} \mathcal{L}_{\text{PDE}}(\theta) 
+ w_{\text{bc}} \mathcal{L}_{\text{bc}}(\theta),
\label{loss-pi-don}
\end{align}
where $\mathcal{L}_{\text{PDE}}(\theta)$ denotes the PDE residual loss, $\mathcal{L}_{\text{bc}}(\theta)$ the boundary condition loss, and $w_{\text{data}}, w_{\text{PDE}},$ and $w_{\text{bc}}$ are the corresponding weights for each term.

\subsection{Adaptive Scale Embedding\label{ss:scale}}

To improve the representational ability of neural networks for learning functions or operators with localized features, we introduce an \emph{adaptive scale feature} that incorporates local spatial rescaling around a reference point~$\mathbf{x}_c$~\cite{zhuang2024two}. 
Rather than relying solely on the original physical coordinate $\mathbf{x}$ as network input, the proposed approach embeds an additional scale-dependent feature that encodes multiscale spatial variations. 
Specifically, for a general neural network with input variable $\mathbf{x}$, the additional multiscale feature augments the original coordinates with locally rescaled displacements around a reference point~$\mathbf{x}_c$ such that:
\begin{align}
  \mathbf{x}
  \;\longrightarrow\;  \mathbf{x}\oplus\xi \quad \text{   with   } \quad\xi\coloneqq 
    \bigoplus_{i = 1}^{\Gamma}\!\bigl(
        \epsilon^{\gamma_i}(\mathbf{x}-\mathbf{x}_c)
    \bigr),
\end{align}
where $\epsilon > 0$ is a prescribed or learnable scaling parameter that controls the degree of local stretching around $\mathbf{x}_c$, $\gamma_i$ is a parameter in exponent that adjusts the effective scaling, and $\Gamma \in \mathbb{N}$ denotes the number of multiscale components included in the embedding. 
Accordingly, the augmented input feature map is defined as
\begin{align}
    \Psi_{\Gamma}\!\left(\mathbf{x};\,\mathbf{x}_c,\epsilon\right)
    =\mathbf{x}\oplus\xi
    % =
    % \Bigl[
    %     \mathbf{x};\;
    %     \bigoplus_{i = 1}^{\Gamma}\!\bigl(
    %         \epsilon^{\gamma_i}(\mathbf{x}-\mathbf{x}_c)
    %     \bigr)
    % \Bigr],
\end{align}
where the concatenation of the original coordinate $\mathbf{x}$ with its scaled displacements 
$(\mathbf{x}-\mathbf{x}_c)$ forms a hierarchical spatial representation capable of capturing different scale variations in the target function or operator. 
In this work, we employ the simplest single-scale version,
\begin{align}
    \Psi\!\left(\mathbf{x};\,\mathbf{x}_c,\epsilon\right)
    =
    \mathbf{x}\oplus\xi
    % \Bigl[
    %     \mathbf{x};\;
    %     \xi
    % \Bigr]
    =
        \mathbf{x}\oplus
        \epsilon^{\gamma}(\mathbf{x}-\mathbf{x}_c),
\end{align}
which already yields a faster convergence in function approximations and more accurate learning in PI-DeepOnet framework.
The adaptive-scale embedding changes the structure of the underlying Neural Tangent Kernel (NTK), leading to an increase in its smallest eigenvalue and improved spectral conditioning of the associated learning dynamics, as discussed in the subsequent Section \ref{ss:ntk-convergence}.

\subsubsection{Convergence of Function Approximation with Adaptive Scale \label{ss:ntk-convergence}}

% \cite{jacot2018ntk,lee2019wide,arora2019finegrained}
The Neural Tangent Kernel (NTK) framework provides a theoretical framework to interprete the optimization behavior of wide neural networks.
Introduced by Jacot, Gabriel, and Hongler \cite{jacot2018ntk}, it states that when a network is sufficiently wide, its parameters evolve almost linearly during training which is known as the lazy-training regime, such  that gradient descent becomes equivalent to kernel gradient descent governed by a deterministic kernel, called the NTK.
Subsequent studies have extended this framework to deeper and finite-width networks, clarifying convergence rates, spectral properties, and the transition toward feature learning \cite{lee2019wide,arora2019finegrained,chizat2019lazy}.
Within this setting, the smallest nonzero eigenvalue of the NTK operator determines the exponential rate at which the training error decays.
This spectral viewpoint directly motivates our analysis: the adaptive input scaling effectively reshapes the NTK, which explains the observed faster convergence of networks trained with such adaptive features.
The NTK-based Convergence for Function Approximation states as follows:
\begin{proposition}[NTK-based Convergence for Function Approximation]
\label{prop:ntk-convergence}
Let $h_\theta:\mathcal{X}\to\mathbb{R}$ be a sufficiently wide neural network trained on a target function $f$ by gradient descent on $\mathcal{L}(\theta)=\tfrac{1}{2}\,\mathbb{E}_{x\sim\mu}[(h_\theta(x)-f(x))^2]$.  
In the lazy-training regime, i.e. the linear regime, linearization around $\theta_0$ gives $h_\theta(x)\approx h_{\theta_0}(x)+\nabla_\theta h_{\theta_0}(x)^\top(\theta-\theta_0)$ with Neural Tangent Kernel (NTK) $K(x,x')=\nabla_\theta h_{\theta_0}(x)^\top\nabla_\theta h_{\theta_0}(x')$ and associated integral operator $(\mathcal{T}_K\varphi)(x)=\int K(x,x')\varphi(x')\,d\mu(x')$.  
Under gradient flow $\tfrac{d\theta}{dt}=-\eta\nabla_\theta\mathcal{L}(\theta)$, the training error $e_t(x)=h_{\theta(t)}(x)-f(x)$ evolves as
\begin{align}
\frac{d}{dt}e_t(x)&=-\eta(\mathcal{T}_K e_t)(x), &
e_t&=e^{-\eta t\mathcal{T}_K}e_0,
\end{align}
and satisfies the exponential decay
\begin{equation}
\|e_t\|_{L^2(\mu)}^2
\le e^{-2\eta t\lambda_{\min}}\|e_0\|_{L^2(\mu)}^2,
\end{equation}
where $\lambda_{\min}>0$ is the smallest nonzero eigenvalue of $\mathcal{T}_K$.  
Thus, the convergence rate of the neural network toward $f$ is governed by the spectral gap of $\mathcal{T}_K$.
\end{proposition}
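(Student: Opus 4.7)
The strategy is to reduce the parameter-space gradient flow to a closed-form linear evolution in function space driven by the NTK integral operator, then invoke the spectral theorem to extract the exponential decay rate. I will work throughout in $L^2(\mu)$, identifying the residual $e_t(x)=h_{\theta(t)}(x)-f(x)$ as the object whose norm we want to control.

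\paragraph{Step 1: Derive the function-space ODE.} First I would differentiate $h_{\theta(t)}(x)$ along the trajectory. The chain rule combined with $\dot\theta=-\eta\nabla_\theta\mathcal{L}(\theta)$ gives
\begin{equation}
\frac{d}{dt} h_{\theta(t)}(x) \;=\; \nabla_\theta h_{\theta(t)}(x)^\top \dot\theta(t) \;=\; -\eta\,\nabla_\theta h_{\theta(t)}(x)^\top \,\mathbb{E}_{x'\sim\mu}\!\left[e_t(x')\,\nabla_\theta h_{\theta(t)}(x')\right].
\end{equation}
Next I would invoke the lazy-training hypothesis, i.e. replace $\nabla_\theta h_{\theta(t)}$ by $\nabla_\theta h_{\theta_0}$, so that the inner product of gradients becomes exactly the NTK $K(x,x')$. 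Swapping the expectation and inner product (justified by Fubini under mild integrability) yields
\begin{equation}
\frac{d}{dt} e_t(x) \;=\; \frac{d}{dt} h_{\theta(t)}(x) \;=\; -\eta \int_{\mathcal{X}} K(x,x')\, e_t(x')\, d\mu(x') \;=\; -\eta\,(\mathcal{T}_K e_t)(x),
\end{equation}
which is the first claimed identity. Because $\mathcal{T}_K$ does not depend on $t$ under the linearization, the solution is the operator exponential $e_t = e^{-\eta t\mathcal{T}_K} e_0$.

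\paragraph{Step 2: Spectral decomposition and decay bound.} I would then verify that $\mathcal{T}_K$ is self-adjoint and positive semi-definite on $L^2(\mu)$: symmetry of $K$ is immediate from its definition as an inner product of feature maps, and positivity follows because for any $\varphi\in L^2(\mu)$,
\begin{equation}
\langle \varphi, \mathcal{T}_K\varphi\rangle \;=\; \Big\| \mathbb{E}_{x\sim\mu}\!\left[\varphi(x)\,\nabla_\theta h_{\theta_0}(x)\right]\Big\|^2 \;\ge\; 0.
\end{equation}
Assuming enough regularity on $K$ and $\mu$ to make $\mathcal{T}_K$ Hilbert–Schmidt (hence compact), the spectral theorem yields an orthonormal eigenbasis $\{\phi_k\}$ with nonnegative eigenvalues $\{\lambda_k\}$. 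Expanding $e_0 = \sum_k c_k\phi_k$ decouples the ODE into scalar equations $\dot c_k = -\eta\lambda_k c_k$, so $e_t = \sum_k c_k e^{-\eta t \lambda_k}\phi_k$. Parseval's identity then gives
\begin{equation}
\|e_t\|_{L^2(\mu)}^2 \;=\; \sum_k c_k^2\, e^{-2\eta t\lambda_k} \;\le\; e^{-2\eta t \lambda_{\min}}\|e_0\|_{L^2(\mu)}^2,
\end{equation}
provided $e_0$ has no component in $\ker\mathcal{T}_K$ (or, more carefully, $\lambda_{\min}$ is the smallest eigenvalue on the subspace spanned by the residual).

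\paragraph{Main obstacle.} The delicate step is rigorously justifying the lazy-training linearization: one must show that $\nabla_\theta h_{\theta(t)}-\nabla_\theta h_{\theta_0}$ is uniformly small along the trajectory, which in turn requires width-dependent control of $\|\theta(t)-\theta_0\|$. Since the proposition is stated in the idealized lazy regime, I would treat this as an assumption and point to \cite{jacot2018ntk,lee2019wide,chizat2019lazy} for the quantitative width estimates. A secondary technical point is ensuring $\mathcal{T}_K$ is compact with a positive spectral gap on the relevant subspace, which is why $\lambda_{\min}$ is defined as the smallest \emph{nonzero} eigenvalue; this is the quantity the adaptive-scale embedding is subsequently shown to enlarge.
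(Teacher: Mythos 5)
Your proposal is correct and follows essentially the same route as the paper, which itself only sketches the argument (chain rule along the gradient flow, frozen Jacobian in the lazy regime, operator exponential, spectral decay) in its Remark on the infinite-width limit and defers the details to the cited NTK reference. Your Step 2 fills in the self-adjointness, positivity, and Parseval details that the paper omits, and your caveat that $e_0$ must have no component in $\ker\mathcal{T}_K$ for the $\lambda_{\min}$ bound is exactly the point behind the proposition's restriction to the smallest \emph{nonzero} eigenvalue.
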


\begin{remark}[Infinite-Width Limit]
In the infinite-width regime, the parameter perturbation during training satisfies 
$\|\theta(t)-\theta_0\|=\mathcal{O}(1/\sqrt{m})$, implying that the Jacobian 
$\nabla_\theta h_\theta(x)$ remains effectively constant. 
Consequently, the linearization 
$h_\theta(x)=h_{\theta_0}(x)+\nabla_\theta h_{\theta_0}(x)^\top(\theta-\theta_0)$ 
becomes exact, and the Neural Tangent Kernel $K(x,x')$ is time-invariant. 
Under the gradient-flow dynamics 
\begin{equation}
\frac{d\theta}{dt}=-\eta\nabla_\theta\mathcal{L}(\theta),
\end{equation}
the network training error evolves as follows:
\begin{equation}
\frac{d}{dt}e_t(x)=-\eta\!\int\!K(x,x')\,e_t(x')\,d\mu(x')
=-\eta(\mathcal{T}_K e_t)(x),
\end{equation}
which further yields the closed-form solution 
$e_t=e^{-\eta t\mathcal{T}_K}e_0$ with the exact exponential decay 
\begin{equation}
\|e_t\|_{L^2(\mu)}^2
\le e^{-2\eta t\lambda_{\min}}\|e_0\|_{L^2(\mu)}^2.
\end{equation}
Thus, in the infinite-width limit, the training dynamics of a neural network 
coincide exactly with kernel regression governed by the NTK operator.
\end{remark}
The detailed proof can be found in \cite{jacot2018ntk}.
Building on this result, we show that introducing an adaptive-scale feature increases the smallest eigenvalue of the NTK, therefore improving the convergence rate of training.

\begin{proposition}[NTK-based Convergence for Function Approximation with Adaptive Scale Feature]
\label{prop:adaptive-eig}
Let $K(x,x')$ denote the baseline Neural Tangent Kernel (NTK) associated with the feature map $x\mapsto[x]$, and let $\mathcal{T}_K$ be its integral operator on $L^2(\mu)$ with smallest nonzero eigenvalue $\lambda_{\min}(K)>0$. 
For a fixed center $x_c\in\mathbb{R}^d$, define the adaptive scale feature map
\begin{align}
\phi_\varepsilon(x)=[\,x,\ \varepsilon^{-1}(x-x_c)\,], 
\qquad 
K_\varepsilon(x,x') := K\!\bigl(\phi_\varepsilon(x),\phi_\varepsilon(x')\bigr),    
\end{align}
and let $\mathcal{T}_{K_\varepsilon}$ be the corresponding NTK operator. 
Assume $K$ is smooth, positive definite, and radially nonincreasing in $\|x-x'\|$. 
Then, for sufficiently small $\varepsilon>0$, there exist constants $c_1,c_2>0$ (independent of $\varepsilon$) such that
\begin{equation}
\label{eq:lambda-comparison}
\lambda_{\min}(K_\varepsilon)
\;\ge\;
\lambda_{\min}(K)\;,
\end{equation}

% \qz{How this (2.20) follows, from Prop. 3?}
% \cm{We can only show it in the later matrix version of Proposition 3.}
Consequently, the adaptive-scaled NTK exhibits a larger spectral gap and faster error decay in function space:
\begin{equation}
\label{eq:adaptive-ntk-decay}
\|e_t\|_{L^2(\mu)}^2
\;\le\;
\exp\!\bigl(-2\eta\,\lambda_{\min}(K_\varepsilon)\,t\bigr)\,
\|e_0\|_{L^2(\mu)}^2,
\qquad
\lambda_{\min}(K_\varepsilon)>\lambda_{\min}(K).
\end{equation}
Hence, shrinking the input scale via $\varepsilon^{-1}$ amplifies local variations near $x_c$, 
shortens the effective kernel correlation length, 
and increases the smallest eigenvalue of the NTK operator that leads to faster exponential convergence under the dynamics of Proposition~\ref{prop:ntk-convergence}.
\end{proposition}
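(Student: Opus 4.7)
\emph{Overall plan.} The strategy is to exploit the stated radial and nonincreasing structure of $K$ to show that the adaptive embedding $\phi_\varepsilon$ uniformly contracts all off-diagonal kernel entries while leaving the diagonal fixed, and then deduce from a Weyl-type perturbation argument in the finite-dimensional NTK Gram-matrix framework that the smallest eigenvalue must grow as $\varepsilon \to 0^+$. The exponential decay bound \eqref{eq:adaptive-ntk-decay} is then a direct substitution of the resulting $\lambda_{\min}(K_\varepsilon)$ into Proposition~\ref{prop:ntk-convergence}.

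\emph{Step 1 — Geometric effect of $\phi_\varepsilon$.} The first thing I would do is the one-line identity
\[
\|\phi_\varepsilon(x)-\phi_\varepsilon(x')\|^2 \;=\; \|x-x'\|^2 + \varepsilon^{-2}\|x-x'\|^2 \;=\; (1+\varepsilon^{-2})\,\|x-x'\|^2 ,
\]
which follows directly from $\phi_\varepsilon(x)=[x,\varepsilon^{-1}(x-x_c)]$. Writing $K(u,v)=k(\|u-v\|)$ for a smooth nonincreasing radial profile $k$, this yields $K_\varepsilon(x,x')=k\!\bigl(\sqrt{1+\varepsilon^{-2}}\,\|x-x'\|\bigr)$. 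Consequently $K_\varepsilon(x,x)=K(x,x)=k(0)$ for all $x$, while for $x\ne x'$ one has $K_\varepsilon(x,x')\le K(x,x')$ with $K_\varepsilon(x,x')\to 0$ as $\varepsilon\to 0^+$.

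\emph{Step 2 — Matrix decomposition and spectral comparison.} I would fix collocation points $\{x_i\}_{i=1}^N$ and form the Gram matrices $[K]_{ij}=K(x_i,x_j)$ and $[K_\varepsilon]_{ij}=K_\varepsilon(x_i,x_j)$. By Step~1 they share the common diagonal $k(0)I$, so I decompose
\[
K = k(0)\,I + R, \qquad K_\varepsilon = k(0)\,I + R_\varepsilon ,
\]
where $R$ and $R_\varepsilon$ are symmetric with zero diagonal and with $0\le [R_\varepsilon]_{ij}\le[R]_{ij}$ for $i\ne j$. Since $\operatorname{tr}(R)=0$ while $R\ne 0$ has a strictly positive Perron eigenvalue, it must have $\lambda_{\min}(R)<0$, hence $\lambda_{\min}(K)<k(0)$. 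On the other hand every off-diagonal entry $[R_\varepsilon]_{ij}=k\!\bigl(\sqrt{1+\varepsilon^{-2}}\,\|x_i-x_j\|\bigr)$ vanishes as $\varepsilon\to 0^+$ (using decay of $k$ at infinity, which follows from smoothness and positive definiteness), so $\|R_\varepsilon\|_{op}\to 0$. Weyl's inequality then gives $\lambda_{\min}(K_\varepsilon)=k(0)+\lambda_{\min}(R_\varepsilon)\to k(0)$, and combining with $\lambda_{\min}(K)<k(0)$ yields $\lambda_{\min}(K_\varepsilon)\ge\lambda_{\min}(K)$ for all sufficiently small $\varepsilon$, which is \eqref{eq:lambda-comparison}. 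Substituting into Proposition~\ref{prop:ntk-convergence} gives \eqref{eq:adaptive-ntk-decay}.

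\emph{Main obstacle.} Step~1 is essentially a computation; the delicate step is the spectral comparison. Entrywise dominance of non-negative matrices does \emph{not} in general imply the analogous inequality for the smallest eigenvalue, so the argument cannot be one-shot from $R_\varepsilon\le R$ — it must instead be routed through the operator-norm limit $R_\varepsilon\to 0$ together with the strict negativity $\lambda_{\min}(R)<0$ coming from the zero-trace / Perron structure. A second subtlety, if one wishes to lift the argument from the Gram matrix to the integral operator $\mathcal{T}_{K_\varepsilon}$ on $L^2(\mu)$ as stated in the proposition, is verifying that the ``smallest nonzero eigenvalue'' behaves continuously under the implicit bandwidth shrinkage; this follows from standard perturbation theory for compact self-adjoint operators once one assumes enough decay on $k$ to place $\mathcal{T}_K$ in the trace class, but it deserves careful spelling out.
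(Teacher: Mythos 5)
Your route is genuinely different from what the paper actually does. The paper never proves Proposition~\ref{prop:adaptive-eig} directly: its only rigorous argument is Proposition~\ref{prop:matrix-ntk-monotone}, where the adaptive channel is modelled as an \emph{additional, disjoint parameter block}, so the network Jacobian gains columns, the empirical Gram matrix becomes $K_\varepsilon=K+G_\xi G_\xi^\top$ with $G_\xi G_\xi^\top\succeq 0$, and $\lambda_{\min}(K_\varepsilon)\ge\lambda_{\min}(K)$ falls out of Courant--Fischer in one line. That argument needs no radial structure and no smallness of $\varepsilon$ (indeed nothing in it depends on $\varepsilon$), but it buys only the non-strict inequality and says nothing about \emph{why} shrinking $\varepsilon$ helps. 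Your argument instead works with the kernel itself: the identity $\|\phi_\varepsilon(x)-\phi_\varepsilon(x')\|^2=(1+\varepsilon^{-2})\|x-x'\|^2$ is correct and is the right way to make the ``shortened correlation length'' heuristic precise, and you then quantify how the Gram matrix approaches $k(0)I$ as $\varepsilon\to 0^+$. This is closer in spirit to what the proposition actually claims (an $\varepsilon$-dependent improvement), and your warning that entrywise dominance $R_\varepsilon\le R$ does not by itself control $\lambda_{\min}$ is exactly the trap the paper's informal discussion skates over.

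There is, however, one genuine gap. In Step~2 you assert that $k(r)\to 0$ as $r\to\infty$ ``follows from smoothness and positive definiteness.'' It does not: the kernel $k(r)=1+e^{-r^2}$ is smooth, radially nonincreasing, and positive definite (sum of the constant kernel and a Gaussian), yet $k(\infty)=1$. For such a kernel $R_\varepsilon$ does not vanish in operator norm; it converges to $L(\mathbf{1}\mathbf{1}^\top-I)$ with $L=k(\infty)>0$, so $\lambda_{\min}(K_\varepsilon)\to k(0)-L$ rather than $k(0)$, and your chain of inequalities breaks. The conclusion can be rescued: writing $R=L(\mathbf{1}\mathbf{1}^\top-I)+R'$ with $R'\ge 0$ entrywise and zero diagonal, the restriction of $R'$ to $\mathbf{1}^\perp$ has nonpositive trace, so some unit $v\perp\mathbf{1}$ gives $v^\top K v\le k(0)-L$, whence $\lambda_{\min}(K)\le\lim_{\varepsilon\to 0^+}\lambda_{\min}(K_\varepsilon)$. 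But as written you must either add the hypothesis $k(r)\to 0$ or supply this extra step. Two smaller points: the strict inequality $\lambda_{\min}(K_\varepsilon)>\lambda_{\min}(K)$ asserted in \eqref{eq:adaptive-ntk-decay} requires $R\ne 0$, i.e.\ at least one pair of points at which $K$ exceeds its asymptotic value, which should be stated; and your closing caveat about lifting to the integral operator is well taken --- for a compact operator on $L^2(\mu)$ the ``smallest nonzero eigenvalue'' is generically an infimum equal to zero, a defect of the proposition's statement that neither your argument nor the paper's resolves.
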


\begin{proposition}[Matrix monotonicity of adaptive scale NTK ]
\label{prop:matrix-ntk-monotone}
Fix data points $x_1,\dots,x_N$ and let $G_x\in\mathbb{R}^{N\times p}$ be the Jacobian of the network output with respect to a parameter block that only relates to the input $x$, so that the empirical NTK matrix is
\begin{equation}
K \;=\; G_x G_x^\top \;\in\; \mathbb{R}^{N\times N}, \qquad K\succeq 0.
\end{equation}
Introduce an adaptive scale feature $\xi_\varepsilon(x)=(x-x_c)/\varepsilon$ and an \emph{additional, disjoint} parameter block that only connects to this new channel, with Jacobian $G_\xi\in\mathbb{R}^{N\times q}$. The adaptive network’s full Jacobian is the column concatenation
\begin{equation}
\begin{aligned}
G_\varepsilon &= [\,G_x \;\; G_\xi\,], \\[3pt]
K_\varepsilon &= G_\varepsilon G_\varepsilon^\top 
= G_x G_x^\top + G_x G_\xi^\top + G_\xi G_x^\top + G_\xi G_\xi^\top. \\[3pt]
&\;\;= K + G_\xi G_\xi^\top, \quad \text{if } G_x G_\xi^\top = 0.
\end{aligned}
\end{equation}
Then
\begin{equation}
\lambda_{\min}(K_\varepsilon)\;\ge\;\lambda_{\min}(K).
\end{equation}
Moreover, if $\mathrm{range}(G_\xi)\not\subseteq \mathrm{range}(G_x)$, then $\lambda_{\min}(K_\varepsilon)>\lambda_{\min}(K)$.
\end{proposition}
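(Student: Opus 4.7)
The plan is to reduce the eigenvalue comparison to the Rayleigh-quotient characterization of $\lambda_{\min}$ combined with the observation that $K_\varepsilon - K$ is positive semidefinite. First I would verify the identity $K_\varepsilon = K + G_\xi G_\xi^\top$ by direct block multiplication: for the column concatenation $G_\varepsilon = [G_x\;G_\xi] \in \mathbb{R}^{N\times(p+q)}$,
\begin{equation*}
G_\varepsilon G_\varepsilon^\top
= \begin{bmatrix} G_x & G_\xi \end{bmatrix}
  \begin{bmatrix} G_x^\top \\ G_\xi^\top \end{bmatrix}
= G_x G_x^\top + G_\xi G_\xi^\top,
\end{equation*}
so the cross terms vanish automatically and the condition $G_x G_\xi^\top = 0$ appearing in the statement is not actually required here; I would flag the four-term expansion as a harmless typo (those cross terms would arise for the Gram matrix $G_\varepsilon^\top G_\varepsilon$, not for $G_\varepsilon G_\varepsilon^\top$).

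For the non-strict bound $\lambda_{\min}(K_\varepsilon) \geq \lambda_{\min}(K)$, I would use $\lambda_{\min}(A) = \min_{\|v\|=1} v^\top A v$ for symmetric $A$: for every unit $v \in \mathbb{R}^N$,
\begin{equation*}
v^\top K_\varepsilon v = v^\top K v + \|G_\xi^\top v\|^2 \geq v^\top K v,
\end{equation*}
and minimizing both sides over unit $v$ delivers the claim. This is also a direct instance of Weyl's monotonicity under positive semidefinite perturbations, which could be cited in one line as an alternative finish.

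For the strict inequality I would argue by contrapositive. If $\lambda_{\min}(K_\varepsilon) = \lambda_{\min}(K)$ and $v^\star$ is a unit eigenvector of $K_\varepsilon$ realizing this minimum, the identity above forces simultaneously $v^{\star\top} K v^\star = \lambda_{\min}(K)$ (so $v^\star$ lies in the minimum eigenspace $V_{\min}(K)$) and $G_\xi^\top v^\star = 0$ (so $v^\star \in \mathrm{range}(G_\xi)^\perp$). In the rank-deficient regime $\lambda_{\min}(K) = 0$ one has $V_{\min}(K) = \ker(K) = \mathrm{range}(G_x)^\perp$, hence the obstruction lives in $(\mathrm{range}(G_x)+\mathrm{range}(G_\xi))^\perp$; the goal of the final step is to translate the range hypothesis into triviality of this intersection and derive a contradiction.

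The main obstacle is precisely this last step. The hypothesis $\mathrm{range}(G_\xi) \not\subseteq \mathrm{range}(G_x)$ cleanly captures the intuition that the adaptive-scale block injects a new direction, but it does not by itself force $\mathrm{range}(G_x) + \mathrm{range}(G_\xi) = \mathbb{R}^N$, so a nonzero $v^\star$ orthogonal to both ranges can persist whenever $K$ is sufficiently rank-deficient. The cleanest repair, which I would adopt in the write-up, is either (i) to strengthen the strict-inequality hypothesis to $V_{\min}(K) \not\subseteq \mathrm{range}(G_\xi)^\perp$ (equivalently, in the rank-deficient case, $\mathrm{range}(G_x)+\mathrm{range}(G_\xi)=\mathbb{R}^N$), or (ii) to restrict the strict claim to the smallest nonzero eigenvalue, for which the contrapositive argument above goes through unchanged because the relevant minimum eigenspace then automatically sits inside $\mathrm{range}(G_x)$.
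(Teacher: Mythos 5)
Your argument for the non-strict bound is exactly the paper's: write $K_\varepsilon=K+G_\xi G_\xi^\top$, note $v^\top K_\varepsilon v=v^\top Kv+\|G_\xi^\top v\|^2$, and minimize over unit $v$ via Courant--Fischer. You are also right that the cross terms in the stated expansion of $G_\varepsilon G_\varepsilon^\top$ are spurious (block multiplication of $[\,G_x\;G_\xi\,]$ with its transpose gives $G_xG_x^\top+G_\xi G_\xi^\top$ directly, and $G_xG_\xi^\top$ is not even dimensionally consistent unless $p=q$), so the side condition $G_xG_\xi^\top=0$ is not needed.

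Your critique of the strict inequality identifies a real gap that the paper's own proof does not close. The paper asserts that $\mathrm{range}(G_\xi)\not\subseteq\mathrm{range}(G_x)$ forces $G_\xi^\top v_\star\neq 0$ for \emph{every} unit minimizer $v_\star$ of $v^\top Kv$, but this implication fails when $K$ is rank-deficient: with $N=3$, $G_x=e_1$, $G_\xi=e_2$, the range hypothesis holds, yet $v_\star=e_3$ minimizes both quadratic forms and $\lambda_{\min}(K_\varepsilon)=\lambda_{\min}(K)=0$. The correct hypothesis is the one you extract from the equality analysis, namely $V_{\min}(K)\cap\ker(G_\xi^\top)=\{0\}$ (equivalently $V_{\min}(K)\not\ni$ any nonzero vector orthogonal to $\mathrm{range}(G_\xi)$), or else the claim should be restricted to the smallest \emph{nonzero} eigenvalue as you suggest. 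Either repair is sound; your write-up is more careful than the paper's on this point.
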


\begin{proof}
Set $\Delta\!:=G_\xi G_\xi^\top\succeq 0$. For any $v\in\mathbb{R}^N$,
\begin{equation}
v^\top K_\varepsilon v \;=\; v^\top(K+\Delta)v \;=\; v^\top K v \;+\; \|G_\xi^\top v\|_2^2 \;\ge\; v^\top K v .
\end{equation}
Taking the minimum over $\|v\|_2=1$ (Courant--Fischer) yields
\begin{equation}
\lambda_{\min}(K_\varepsilon)
\;=\;\min_{\|v\|=1} v^\top K_\varepsilon v
\;\ge\;\min_{\|v\|=1} v^\top K v
\;=\;\lambda_{\min}(K).
\end{equation}
If $\mathrm{range}(G_\xi)\not\subseteq \mathrm{range}(G_x)$, then for every unit minimizer $v_\star$ of $v^\top K v$ we have $G_\xi^\top v_\star\neq 0$, hence $v_\star^\top K_\varepsilon v_\star > v_\star^\top K v_\star=\lambda_{\min}(K)$, which implies $\lambda_{\min}(K_\varepsilon)>\lambda_{\min}(K)$.
\end{proof}

\subsubsection{An Example of Function Approximation with Adaptive Scale Feature}
To illustrate the preceding proposition, we consider a one-dimensional function \( f(x) \) that shows a steep transition and strong asymptotic behavior near \( x = 1 \). 
It is defined as
\begin{align}
f(x) = \frac{1}{1 + \exp\!\left(-\frac{x - x_c}{0.2\epsilon}\right)}
+ 0.2 \exp\!\left(-\frac{(x - x_c)^2}{2(0.3\epsilon)^2}\right),
\end{align}
where the parameters \(x_c = 1.0\) and \(\epsilon = 0.01\) determine the location and sharpness of the transition region. 
Training samples are uniformly drawn from the interval \([0,2]\) with \(N_{\text{train}} = 200\).
Two multilayer perceptrons (MLPs) of identical depth and width are compared to assess the effect of adaptive input scaling. 
The {plain network} uses the original coordinate \(x\) as input, corresponding to an architecture of \([1,\,20,\,20,\,1]\). 
The \textit{adaptive scale network} augments the input with a locally rescaled feature \(\xi = \epsilon^{-1}(x - x_c)\), resulting in an input dimension of two, i.e., \([2,\,20,\,20,\,1]\). 
Both models uses \(\tanh\) activation functions and are trained in single precision using the Adam optimizer with a learning rate of \(10^{-3}\). 
Training is performed in full-batch mode for \(4000\) gradient descent steps by minimizing the mean squared error (MSE) between network predictions and the target values. The trained models are then evaluated on a fine grid of 800 points in $[0,2]$ to compare their approximation accuracy and to examine how the adaptive-scale feature enhances the learning of sharp local variations.

% We consider a one-dimensional function $f(x)$ characterized by a steep transition and pronounced asymptotic behavior near $x=1$, defined as
% \begin{align}
% f(x) = \frac{1}{1 + \exp\!\left(-\frac{x - x_c}{0.2\epsilon}\right)}
% + 0.2 \exp\!\left(-\frac{(x - x_c)^2}{2(0.3\epsilon)^2}\right),
% \end{align}
% where $x_c = 1.0$ and $\epsilon = 0.01$ control the center and sharpness of the transition region. 
% Training data are uniformly sampled over $[0,2]$ with $N_{\text{train}} = 200$ samples. 
% We compare two feedforward neural networks (multilayer perceptrons, MLPs) with identical depth and width but different input representations. 
% The baseline model uses the raw coordinate $x$ as input, corresponding to an architecture of $[1, 20, 20, 1]$, whereas the adaptive-scale model augments the input with a locally rescaled feature $\xi = \epsilon^{-1}(x - x_c)$, yielding an input dimension of two, i.e., $[2, 20, 20, 1]$. 
% Both networks employ $\tanh$ activation functions and are trained in single precision using the Adam optimizer with a learning rate of $10^{-3}$. 
% Training is performed in full-batch mode for $4000$ gradient descent steps by minimizing the mean squared error (MSE) loss between the network predictions and the true function values. 

Figure~\ref{fig:adaptive-scale} compares the performance of the plain network and adaptive-scale network during training and in their final function approximations. 
As shown in Figure~\ref{fig:fun-train}, the adaptive-scale model exhibits a significantly faster decrease in the training loss, reaching a lower error level within the same number of optimization steps. 
This faster convergence confirms the theoretical prediction that adaptive input scaling enlarges the smallest eigenvalue of NTK, which further improve the convergence the underlying optimization dynamics. 
Figure~\ref{fig:fun-scale} further shows that the adaptive-scale network more accurately resolves the steep transition near \(x=1\), whereas the plain network exhibits oscillations near the steep gradient, as shown in the zoomed-in view of Figure~\ref{fig:fun-scale}.
\begin{figure}[htp!]
    \centering
    \begin{subfigure}[b]{0.48\linewidth}
        \centering
        \includegraphics[width=\linewidth]{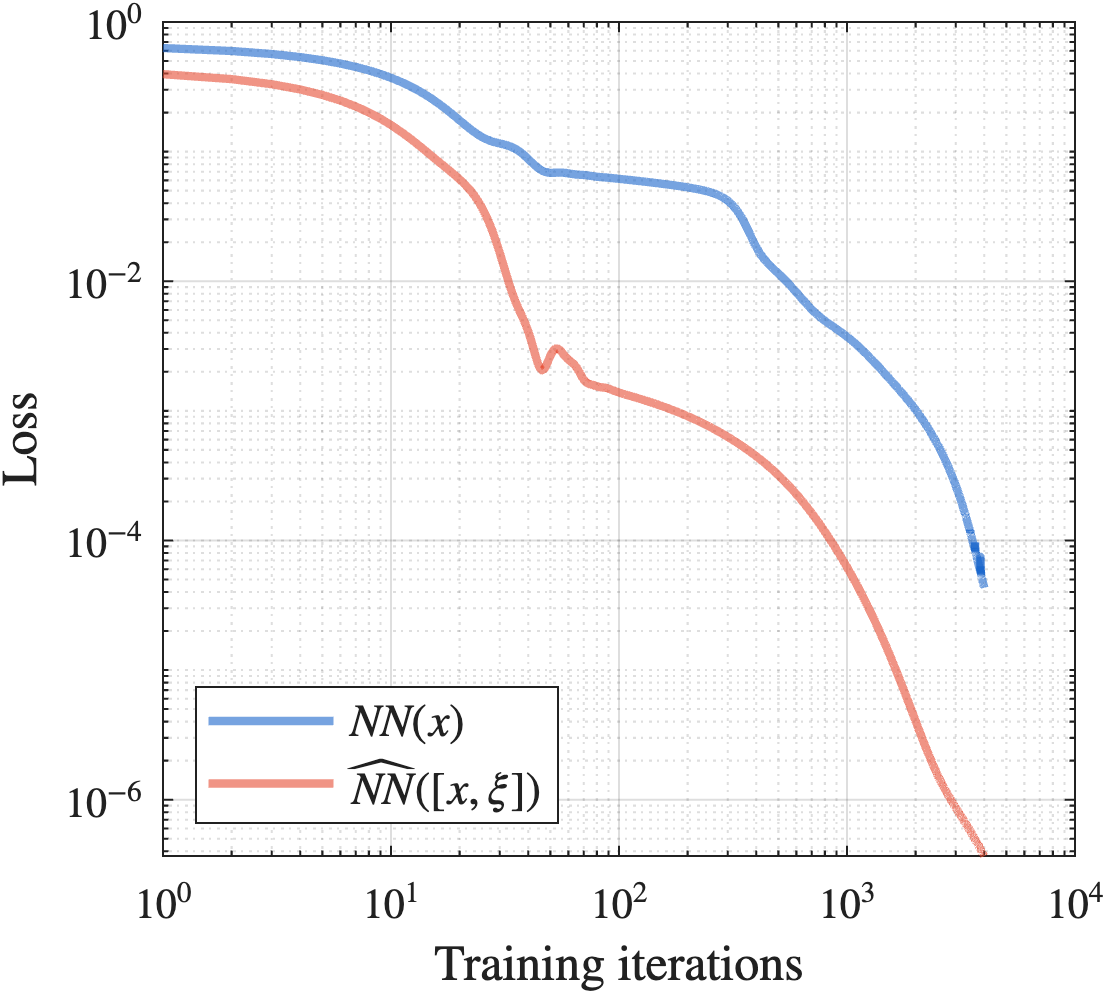}
        \caption{Training loss with and without adaptive scale.}
        \label{fig:fun-train}
    \end{subfigure}
    \hfill
    \begin{subfigure}[b]{0.48\linewidth}
        \centering
        \includegraphics[width=\linewidth]{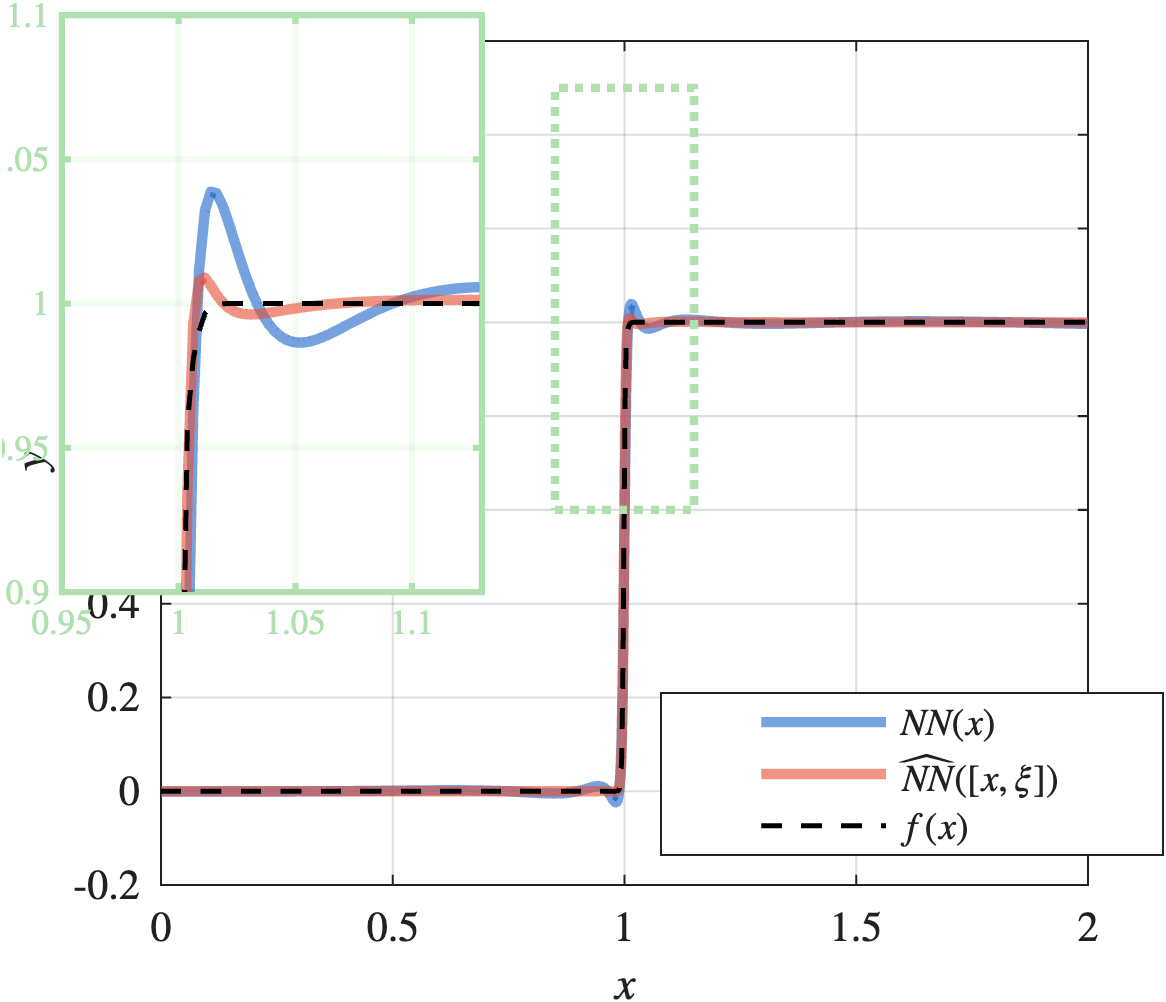}
        \caption{Function approximation with and without adaptive scale.}
        \label{fig:fun-scale}
    \end{subfigure}
    \caption{Illustrations of adaptive-scale features in approximating a steep-gradient function.}
    \label{fig:adaptive-scale}
\end{figure}

\subsection{PAS-Net Framework}% {\color{red} maybe add one illustrative figure that show the idea of this additional scale can help capturing the local sharpness.}
With the adaptive scale embedding introduced in Section~\ref{ss:scale}, the trunk network thus becomes
\begin{align}
    t_k(\mathbf{x}; \mathbf{x}_c, \epsilon(\mathbf{x}_c))
    = \tilde{t}_k\!\Bigl(
        \Psi_{\Gamma}\bigl(\mathbf{x}; \mathbf{x}_c, \epsilon\bigr)
    \Bigr).
\end{align}
Therefore, the physics-informed adaptive scale DeepONet (PAS-Net) yields the following:
\begin{align}
    G_\theta(\kappa(\Xi))(\mathbf{x})
    = \sum_{k=1}^{p} 
        b_k\!\bigl(\kappa(\Xi)\bigr)\,
        \tilde{t}_k\!\Bigl(
        \Psi_{\Gamma}\bigl(\mathbf{x}; \mathbf{x}_c, \epsilon\bigr)\Bigr)
      + b_0,
\end{align}
where the branch network encodes the input function $\kappa$ evaluated at sensor locations $\Xi = \{\xi_1, \xi_2, \ldots, \xi_m\}$, and the multiscale trunk includes local geometric information relative to $\mathbf{x}_c$.
The term $\sum_{\gamma=1}^{\Gamma}\epsilon^{\gamma}(\mathbf{x}-\mathbf{x}_c)$ is an adaptive scale  preconditioner for the coordinate $\mathbf{x}$ that can potentially capture the local features such as sharp gradients or discontinuities near $\mathbf{x}_c$. 
Similar to PI-DeepONet, the loss function for PAS-Net is defined as
\begin{align}
\mathcal{L}(\theta) 
= w_{\text{data}} \mathcal{L}_{\text{data}}(\theta) 
+ w_{\text{PDE}} \mathcal{L}_{\text{PDE}}(\theta) 
+ w_{\text{bc}} \mathcal{L}_{\text{bc}}(\theta),
\label{loss-pas-net}
\end{align}
where $\mathcal{L}_{\text{PDE}}(\theta)$ denotes the PDE residual loss, $\mathcal{L}_{\text{bc}}(\theta)$ the boundary condition loss, and $w_{\text{data}}, w_{\text{PDE}},$ and $w_{\text{bc}}$ are the corresponding weights for each term. 
The overview of PAS-Net framework is illustrated in Figure \ref{fig:pas-net}.

\begin{figure}[H]
    \centering
    \includegraphics[width=\linewidth]{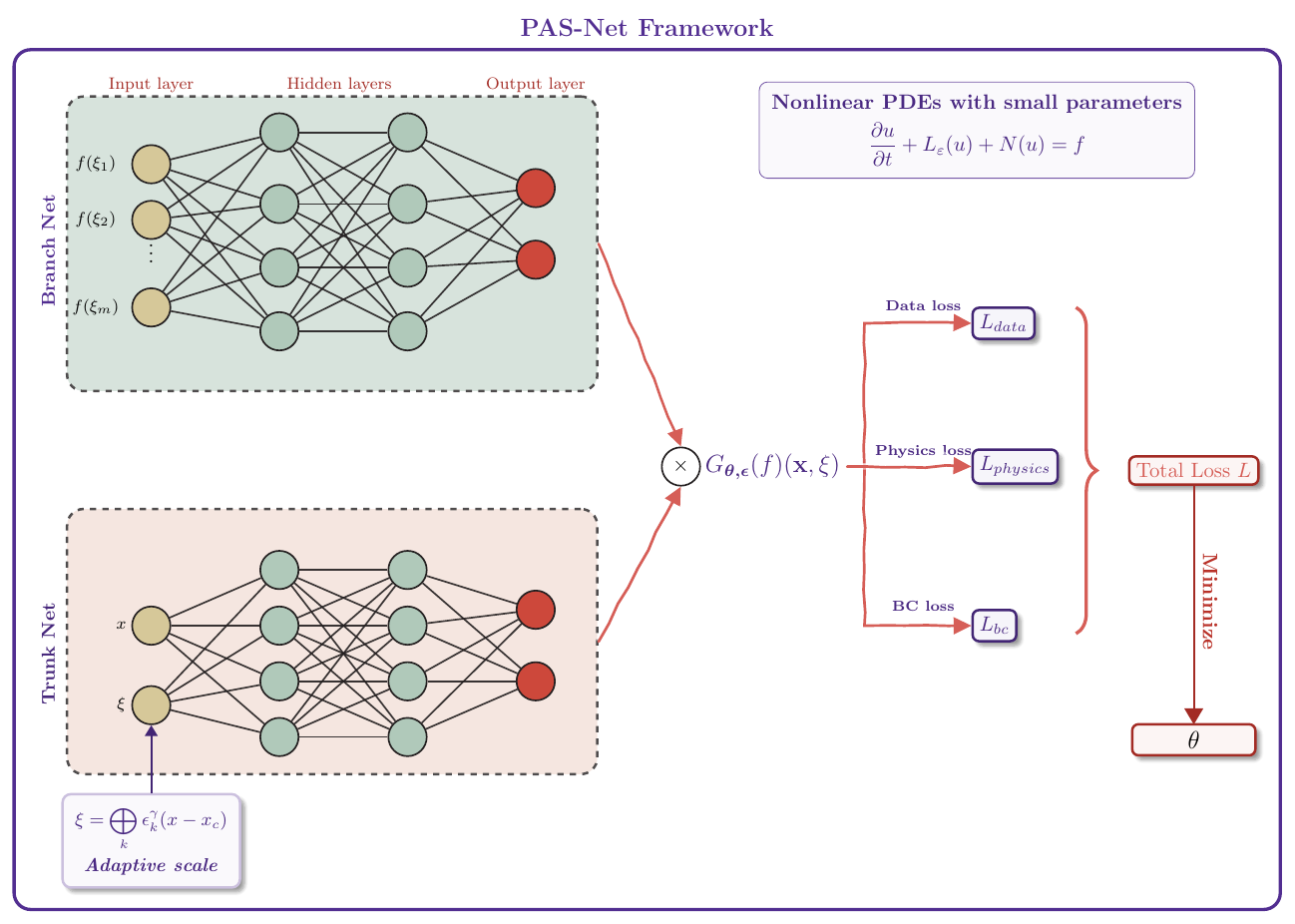}
    \caption{Illustration of PAS-Net framework.}
    \label{fig:pas-net}
\end{figure}

\section{Numerical Results \label{sec:numeric}}
In this section, we evaluate the performance of the proposed PAS-Net on three benchmark problems: the Burgers equation, a nonlinear diffusion–reaction system with forcing, and the Eikonal equation.These test cases, illustrated in Fig.~\ref{fig:test-problem}, are nonlinear and singularly perturbed evolution PDEs characterized by small parameters that show sharp gradients, localized structures, and multi-scale spatiotemporal features.
\begin{figure}[H]
    \centering
    \begin{subfigure}[b]{.32\linewidth}
        \centering
        \includegraphics[width=\linewidth]{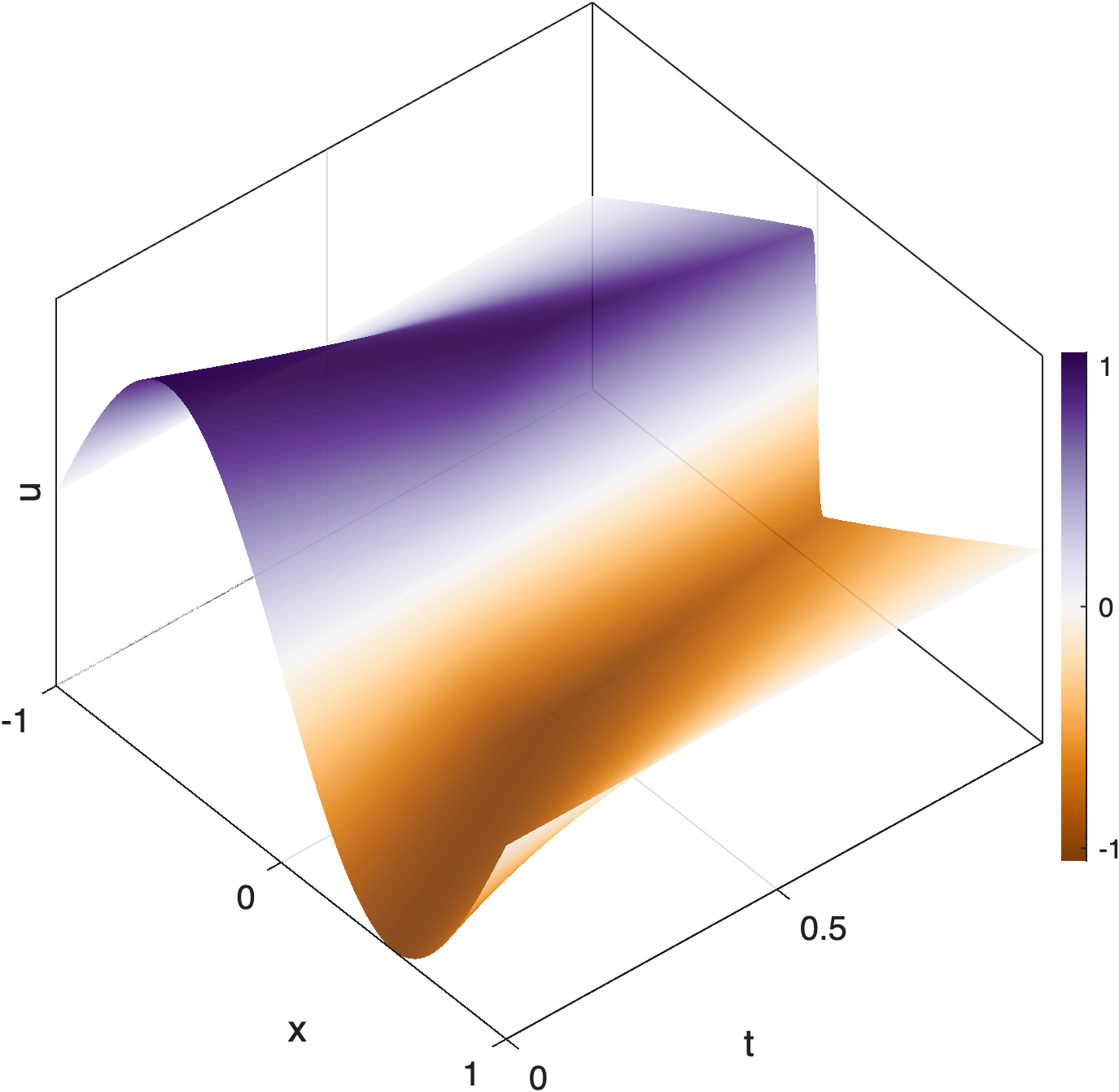}
        \caption{Burgers equation}
        \label{fig:test-burgers}
    \end{subfigure}
        \begin{subfigure}[b]{.32\linewidth}
        \centering
        \includegraphics[width=\linewidth]{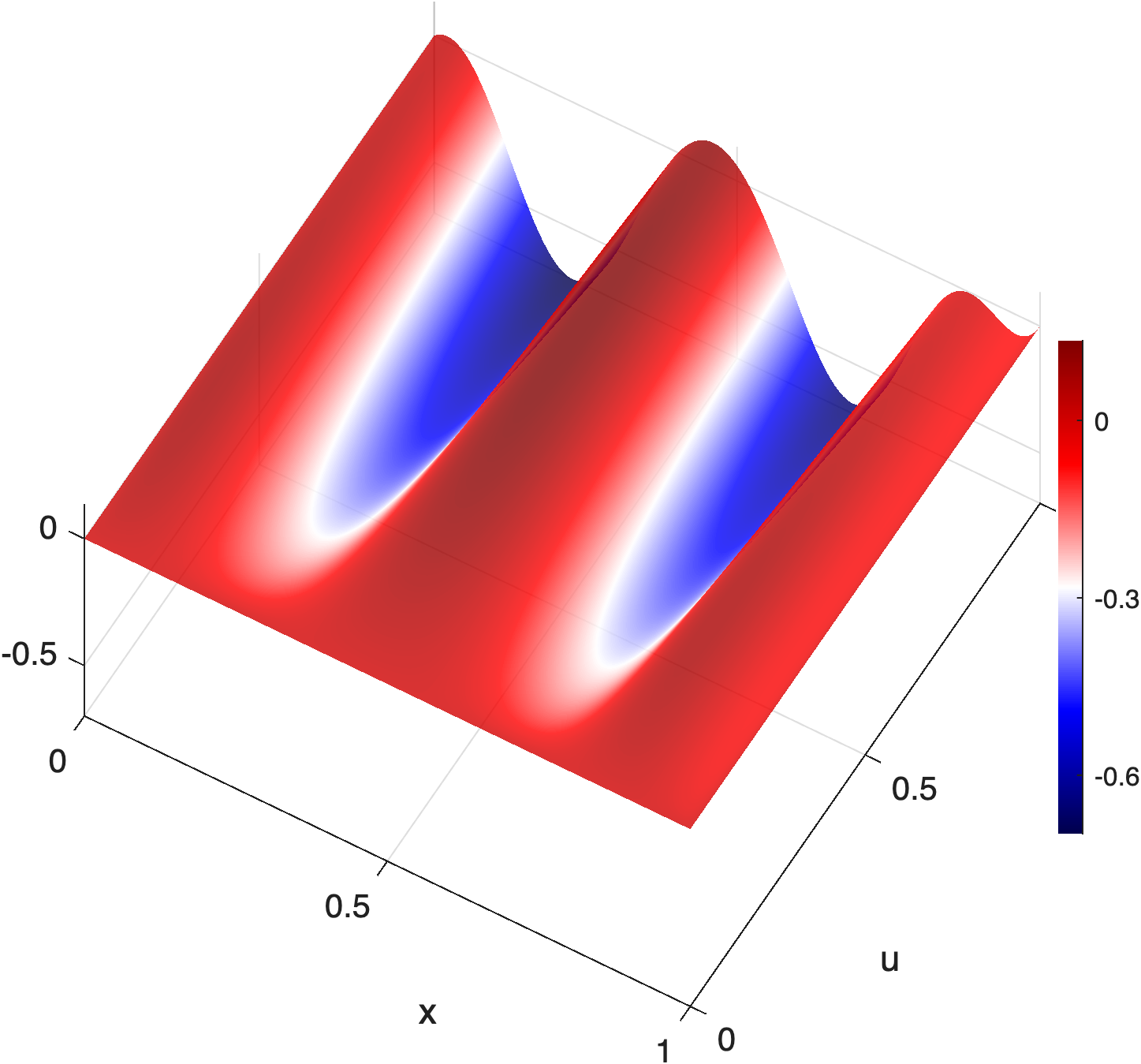}
        \caption{Diffusion-reaction equation}
        \label{fig:test-dre}
    \end{subfigure}
            \begin{subfigure}[b]{.32\linewidth}
        \centering
        \includegraphics[width=\linewidth]{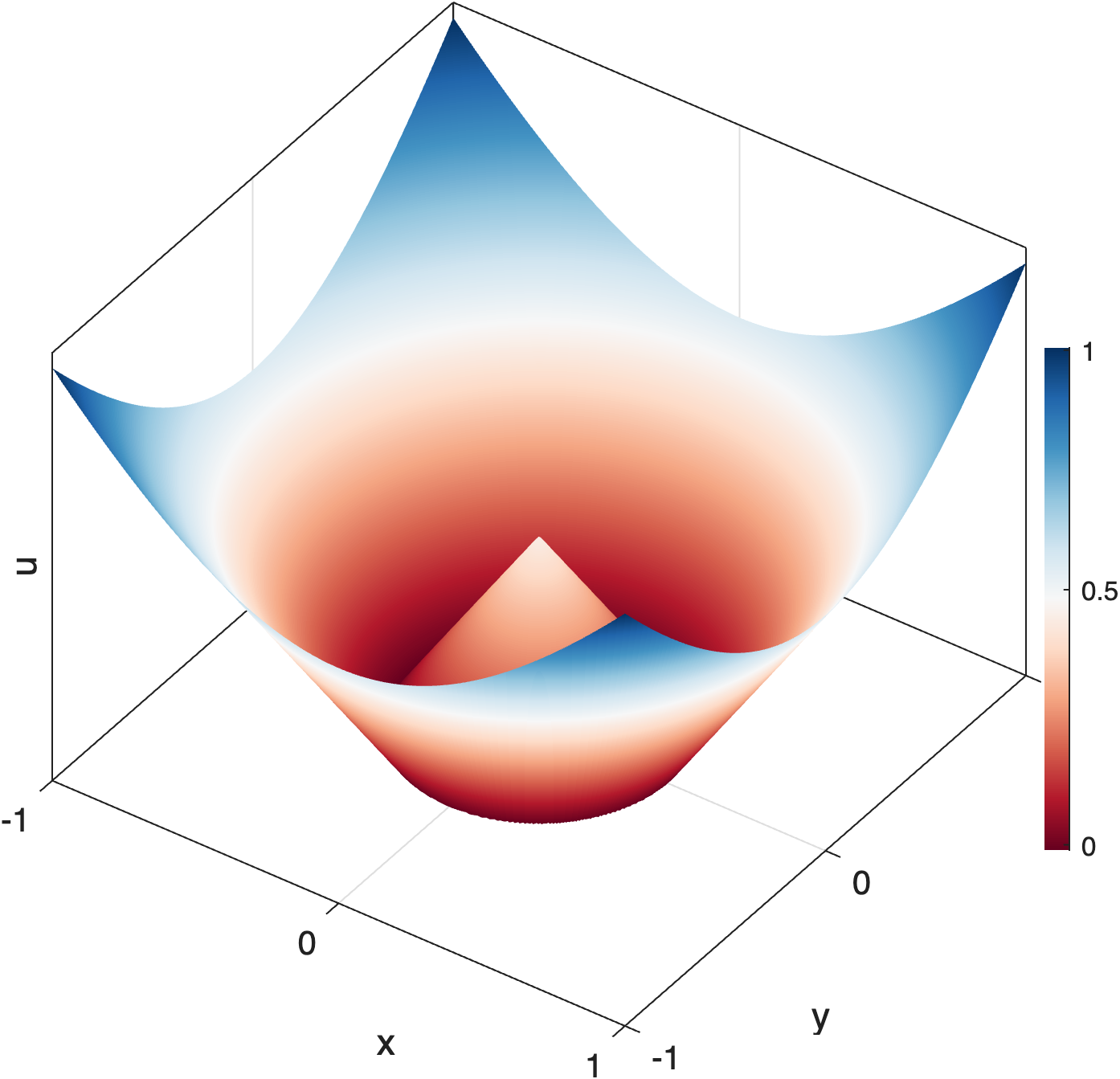}
        \caption{Eikonal equation}
        \label{fig:test-eikonal}
    \end{subfigure}
\caption{Test problems: (a) Burgers equation, (b) diffusion–reaction equation, and (c) Eikonal equation.}
    \label{fig:test-problem}
\end{figure}
\subsection{Burgers equation}
We first consider the one-dimensional Burgers’ equation benchmark investigated in~\cite{li2021fourier,wang2021learning}.  
The Burgers’ equation is a canonical nonlinear PDE that often develops sharp gradients and near-discontinuous features, thereby making it a challenging test case for neural operator models to capture accurately. The one dimensional Burgers equation with periodic boundary conditions (BC) and a given initial condition (IC) yields the following:
\begin{align}
\begin{aligned}
      &\frac{\partial u}{\partial t} 
    + u \frac{\partial u}{\partial x} 
    - \nu \frac{\partial^2 u}{\partial x^2} 
    = 0, 
    &&(x,t) \in (0,1) \times (0,1], \\[2mm]
    &\text{IC:  }u(x,0) = u_0(x), 
    &&x \in (0,1),   \\
    &\text{BC:  }u(0,t) = u(1,t), 
    &&\frac{\partial u}{\partial x}(0,t) = \frac{\partial u}{\partial x}(1,t). 
\end{aligned}
\end{align}
Following the setup in \cite{wang2021learning,li2021fourier}, we consider the time domain $t \in [0,1]$, with the viscosity parameter set to $\nu = 0.01$. The initial condition $u_0(x)$ is sampled from a Gaussian random field:
\begin{align}
    u_0(x) \sim \mathcal{N}\bigl(0,\, 25^2(-\Delta + 5^2 I)^{-4}\bigr),
\end{align}
which also enforces the periodic boundary conditions.

\paragraph{Training and testing data} To construct the training and testing datasets, we follow the setup in \cite{wang2021learning}. 
Specifically, we randomly sample $2000$ input functions from a Gaussian random field (GRF) 
$\mathcal{N}\big(0, 25^2(-\Delta + 5^2 I)^{-4}\big)$. 
Among these, a subset of $N = 1000$ realizations is selected as the training data. 
For each input sample $u$, the Burgers’ equation is solved using a spectral method with periodic boundary conditions, 
starting from $s(x,0) = u(x)$ for $x \in [0,1]$ and integrated to $t = 1$. 
Synthetic data are generated via the \texttt{Chebfun} package~\cite{driscoll2014chebfun} using a Fourier discretization and the ETDRK4 scheme~\cite{cox2002exponential} with a time step of $10^{-4}$.
Temporal snapshots of the solution are recorded every $\Delta t = 0.01$, resulting in $101$ snapshots in total. 
To assess model generalization, two different test datasets are considered: 
(i) a set of $100$ realizations drawn from the same GRF distribution as the training data, and 
(ii) an additional $400$ realizations independently sampled from the GRF. 
Each dataset is evaluated on a $100 \times 101$ spatio-temporal grid.

All models were trained for $150,000$ gradient descent iterations using the Adam optimizer with an exponentially decaying learning rate. The computations were performed on an NVIDIA GeForce RTX 5060 Ti GPU utilizing the Windows Subsystem for Linux (WSL) environment for improved compatibility and performance.

\paragraph{Neural network architectures}
Following \cite{wang2021learning}, the branch and trunk components are each equipped by a seven-layer fully connected neural network. 
Both networks employ $\tanh$ activation functions and comprise $100$ neurons per hidden layer. 
For consistency and fair comparisons, the same architecture is adopted for all models, including DeepONet, PI-DeepONet, and PAS-Net.

\paragraph{Error metric}
To quantitatively evaluate the predictive accuracy of different operator learning methods, we employ the time averaged $L^2$ error as the error metric. 
In particular: the $L^2$ error is defined as
\begin{align}
{\mathcal{E}}_{L^2} = \sqrt{\frac{1}{NM} \sum_{j=1}^{N} \sum_{k=1}^{M} 
\big| \tilde{u}(x_j, t_k) - \hat{u}(x_j, t_k) \big|^2},
\end{align}
where $\tilde{u}$ denotes the neural network solution and $\hat{u}$ represents the high-fidelity numerical solver result. 
To account for variability across different input conditions or parameter realizations, we further compute the \emph{mean} $L^2$ error by averaging ${\mathcal{E}}_{L^2}$ over an ensemble of test samples. 
Let $S$ denote the number of test realizations (ensembles), each corresponding to a distinct input function or parameter instance with supscript $(\cdot)^{(s)}$.
The ensemble-mean $L^2$ error is then defined as
\begin{align}
\overline{\mathcal{E}}_{L^2} 
= \frac{1}{S} \sum_{s=1}^{S} 
{\mathcal{E}}_{L^2}^{(s)}
= \frac{1}{S} \sum_{s=1}^{S} 
\sqrt{\frac{1}{NM} 
\sum_{j=1}^{N} \sum_{k=1}^{M} 
\big| \tilde{u}^{(s)}(x_j, t_k) - 
\hat{u}^{(s)}(x_j, t_k) \big|^2}.
\label{eq:mean-error}
\end{align}
The corresponding standard deviation (std) of the ensemble yields the folloing:
\begin{align}
\sigma_{\mathcal{E}_{L^2}}
= \sqrt{ \frac{1}{S-1} 
\sum_{s=1}^{S} 
\left( 
% \sum_{j=1}^{N} \sum_{k=1}^{M} 
% \big| \tilde{u}^{(s)}(x_j, t_k) - 
% \hat{u}^{(s)}(x_j, t_k) \big|^2}
\mathcal{E}^{(s)}_{L^2}
- \overline{\mathcal{E}}_{L^2} 
\right)^2 }.
\label{eq:std-error}
\end{align}
Unless otherwise specified, all reported results correspond to the mean $L^2$ error $\overline{\mathcal{E}}_{L^2}$ computed over the test ensemble.

Table \ref{tab:l2-burgers} shows the mean $L^2$ errors and corresponding standard deviation of the one-dimensional Burgers equation for DeepONet, PI-DeepONet and PAS-Net with different $\gamma$ values.
Under the same number of training iterations, the plain vanilla DeepONet shows the largest errors which implies a slower convergence rate and a limited ability to capture the sharp gradients and nonlinear advection–diffusion behavior in the Burgers dynamics.
With the inclusion of the physics-based loss, PI-DeepONet significantly reduces both the mean and variability of the prediction errors. Further improvements are achieved by the proposed PAS-Net for all $\gamma$ settings that implies the adaptive-scale feature enhances convergence and enables more accurate representation of multi-scale features and localized shocks.
Specifically, the case $\gamma=-1$ achieves the lowest mean and standard deviation errors, corresponding to nearly a half reduction in prediction error relative to PI-DeepONet.
The PAS-Net performances are consistent with different sample sizes in the testing case, i.e., $S=100$ and $S=400$, which indicates that PAS-Net generalizes well with limited training data. 
This observation is further confirmed by Figure~\ref{fig:burgers-result} with benchmark solutions in Figure \ref{fig:burgers-bench}, which shows the spatiotemporal fields for three different initial conditions obtained by PI-DeepONet and PAS-Net under various $\gamma$ values. The DeepONet result is not shown for clarity, as its error magnitude is considerably larger than those of the other models, consistent with the values reported in Table~\ref{tab:l2-burgers}. In comparison, PAS-Net produces noticeably smaller error fields than PI-DeepONet for all three different initial conditions.

% \begin{table}[htp!]
%     \centering
%     \caption{Mean $L^2$ errors of the Burgers equation obtained by DeepONet, PI-DeepONet, and PAS-Net for different values of $\gamma$. {\color{red} do we have std for Burgers equation?}}
%     \label{tab:l2-burgers}
%     \vspace{2mm}
%     \renewcommand{\arraystretch}{1.2}
%     \begin{tabular}{lc|cc|cccc}
%         \toprule
%         & &&& \multicolumn{3}{c}{\textbf{PAS-Net (different $\gamma$ values)}} \\
%         \cmidrule(lr){3-4} \cmidrule(lr){5-8}
%         \textbf{Model} &Sample&\textbf{DeepONet} & \textbf{PI-DeepONet}& $\boldsymbol{\gamma=-\tfrac{1}{3}}$  & $\boldsymbol{\gamma=-\tfrac{1}{2}}$ & $\boldsymbol{\gamma=-\tfrac{2}{3}}$ & $\boldsymbol{\gamma=-1}$ \\
%         \midrule
%        \multirow{ 2}{*}{ $\overline{\mathcal{E}}_{L^2}$ }  &$100$  &$2.38\times 10^0$ & $3.34\times 10^{-2}$ &$2.03\times 10^{-2}$& $2.21\times 10^{-2}$ & $2.08\times 10^{-2}$ & $1.84\times 10^{-2}$ \\
%        &$400$ & $2.41\times 10^0$ & $3.15\times 10^{-2}$ &$1.93\times 10^{-2}$&$2.09\times 10^{-2}$ & $2.07\times 10^{-2}$ & $1.84\times 10^{-2}$ \\
%         \bottomrule
%     \end{tabular}
% \end{table}

\begin{table}[htp!]
    \centering
    \caption{Mean relative $L^2$ errors and standard deviations of the Burgers equation obtained by DeepONet, PI-DeepONet, and PAS-Net for different values of $\gamma$.}
    \label{tab:l2-burgers}
    \vspace{2mm}
    \renewcommand{\arraystretch}{1.2}
    \begin{tabular}{lc|cc|cccc}
        \toprule
        & & & & \multicolumn{4}{c}{\textbf{PAS-Net (different $\gamma$ values)}} \\
        \cmidrule(lr){5-8}
        \textbf{Sample} & \textbf{Metric} & \textbf{DeepONet} & \textbf{PI-DeepONet} & $\boldsymbol{\gamma=-\tfrac{1}{3}}$ & $\boldsymbol{\gamma=-\tfrac{1}{2}}$ & $\boldsymbol{\gamma=-\tfrac{2}{3}}$ & $\boldsymbol{\gamma=-1}$ \\
        \midrule
        \multirow{2}{*}{$100$} & Mean $\overline{\mathcal{E}}_{L^2}$ & $2.38\times 10^0$ & $3.34\times 10^{-2}$ & $2.03\times 10^{-2}$ & $2.21\times 10^{-2}$ & $2.08\times 10^{-2}$ & $1.84\times 10^{-2}$ \\
        & Std $\mathcal{E}_{L^2}$& $1.56\times 10^0$ & $3.45\times 10^{-2}$ & $2.49\times 10^{-2}$ & $2.98\times 10^{-2}$ & $2.55\times 10^{-2}$ & $2.05\times 10^{-2}$ \\
        \midrule
        \multirow{2}{*}{$400$} & Mean $\overline{\mathcal{E}}_{L^2}$ & $2.41\times 10^0$ & $3.15\times 10^{-2}$ & $1.93\times 10^{-2}$ & $2.09\times 10^{-2}$ & $2.07\times 10^{-2}$ & $1.84\times 10^{-2}$ \\
        & Std $\mathcal{E}_{L^2}$ & $1.57\times 10^0$ & $2.81\times 10^{-2}$ & $1.51\times 10^{-2}$ & $1.63\times 10^{-2}$ & $1.75\times 10^{-2}$ & $1.32\times 10^{-2}$ \\
        \bottomrule
    \end{tabular}
\end{table}

% \begin{table}[htp!]
%     \centering
%     \caption{Mean relative $L^2$ errors and standard deviations of the Burgers equation obtained by DeepONet, PI-DeepONet, and PAS-Net for different values of $\gamma$.}
%     \label{tab:l2-burgers}
%     \vspace{2mm}
%     \renewcommand{\arraystretch}{1.2}
%     \begin{tabular}{lc|cc|cccc}
%         \toprule
%         & &&& \multicolumn{3}{c}{\textbf{PAS-Net (different $\gamma$ values)}} \\
%         \cmidrule(lr){3-4} \cmidrule(lr){5-8}
%         \textbf{Statistics} &Sample&\textbf{DeepONet} & \textbf{PI-DeepONet}& $\boldsymbol{\gamma=-\tfrac{1}{3}}$  & $\boldsymbol{\gamma=-\tfrac{1}{2}}$ & $\boldsymbol{\gamma=-\tfrac{2}{3}}$ & $\boldsymbol{\gamma=-1}$ \\
%         \midrule
%         \multirow{2}{*}{Mean $\mathcal{E}_{L^2}$} &$100$  &$2.38\times 10^0$ & $3.34\times 10^{-2}$ &$2.03\times 10^{-2}$& $2.21\times 10^{-2}$ & $2.08\times 10^{-2}$ & $1.84\times 10^{-2}$ \\
%         &$400$ & $2.41\times 10^0$ & $3.15\times 10^{-2}$ &$1.93\times 10^{-2}$&$2.09\times 10^{-2}$ & $2.07\times 10^{-2}$ & $1.84\times 10^{-2}$ \\
%         \midrule
%         \multirow{2}{*}{Std $\mathcal{E}_{L^2}$} &$100$  &$1.56\times 10^0$ & $3.45\times 10^{-2}$ &$2.49\times 10^{-2}$& $2.98\times 10^{-2}$ & $2.55\times 10^{-2}$ & $2.05\times 10^{-2}$ \\
%         &$400$ & $1.57\times 10^0$ & $2.81\times 10^{-2}$ &$1.51\times 10^{-2}$&$1.63\times 10^{-2}$ & $1.75\times 10^{-2}$ & $1.32\times 10^{-2}$ \\
%         \bottomrule
%     \end{tabular}
% \end{table}
\begin{figure}[H]
\includegraphics[width=\textwidth]{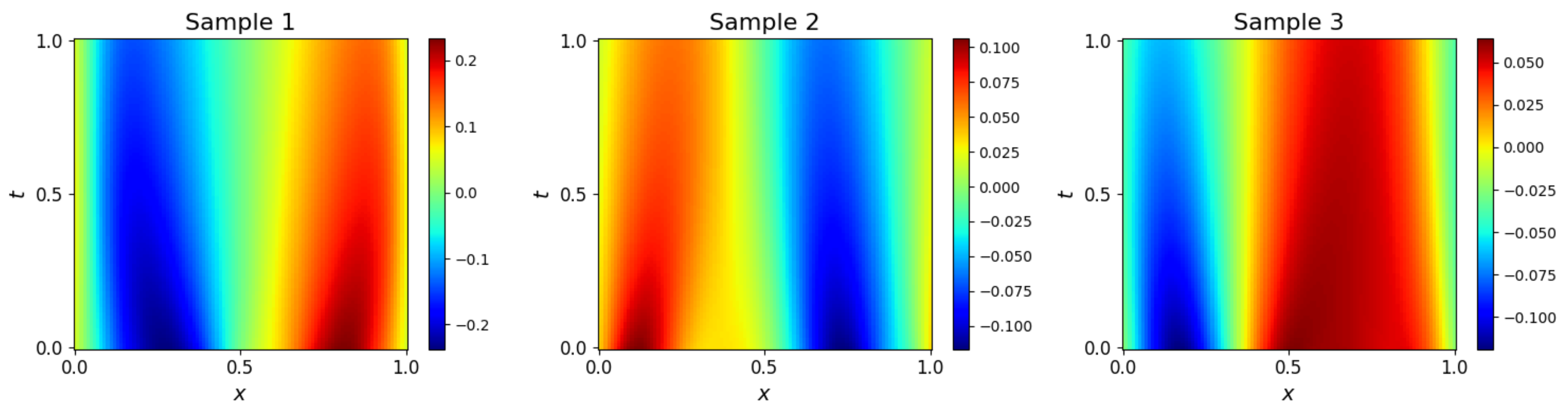}
    \caption{Benchmark solutions of three different samples used in Figure \ref{fig:burgers-result} for one-dimensional Burgers equation. }
    \label{fig:burgers-bench}
\end{figure}
\begin{figure}[H]
    \centering
    \begin{subfigure}[b]{\linewidth}
        \centering
        \includegraphics[width=\linewidth]{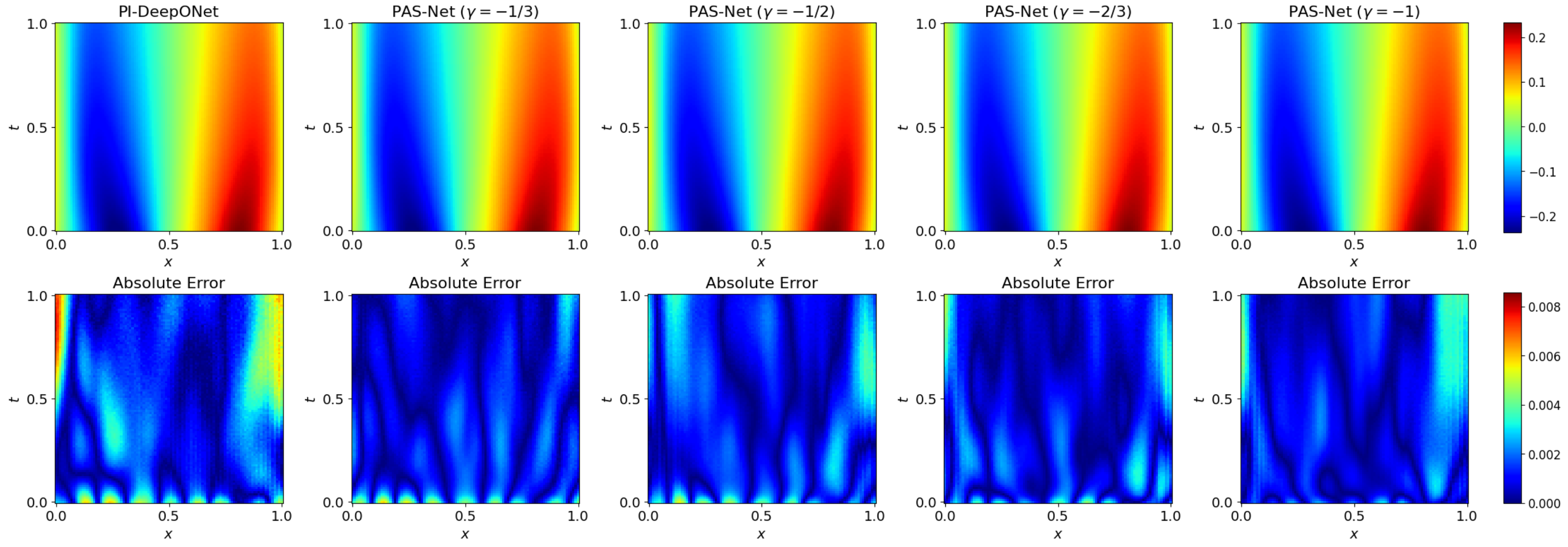}
        \caption{Sample 1}
        \label{fig:burgers-sample-1}
    \end{subfigure}
    \\
    \begin{subfigure}[b]{\linewidth}
        \centering
        \includegraphics[width=\linewidth]{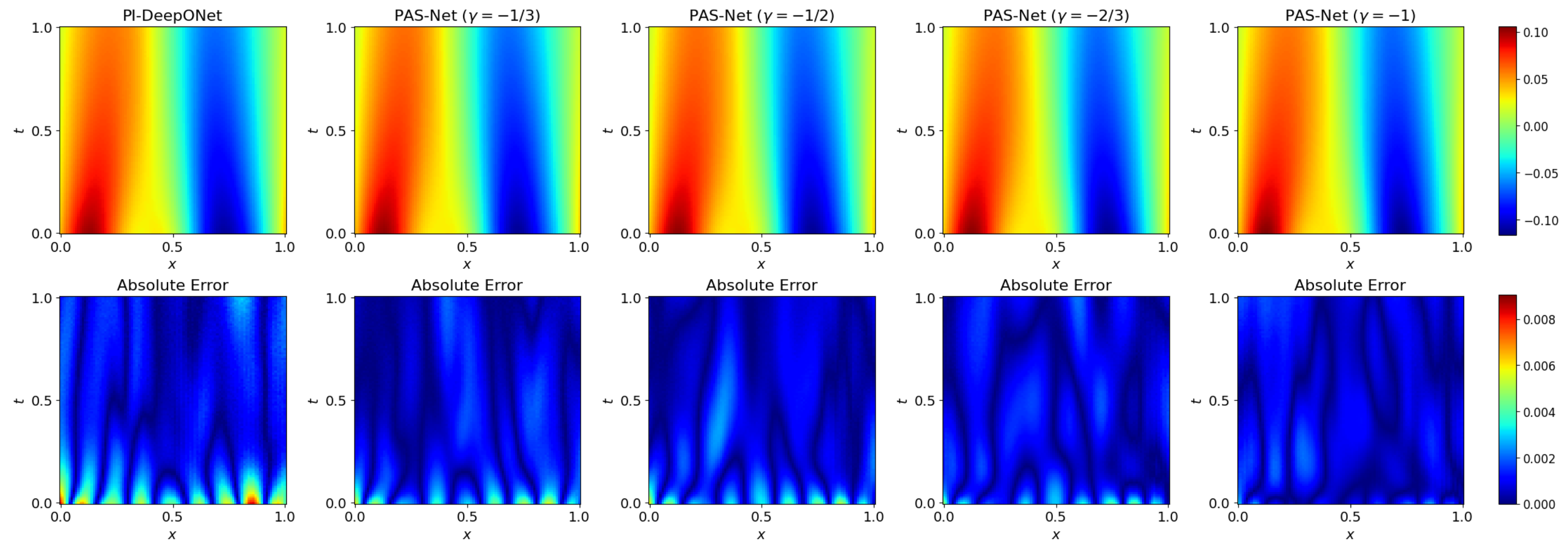}
        \caption{Sample 2}
        \label{fig:burgers-sample-2}
    \end{subfigure}  
        \\
    \begin{subfigure}[b]{\linewidth}
        \centering
        \includegraphics[width=\linewidth]{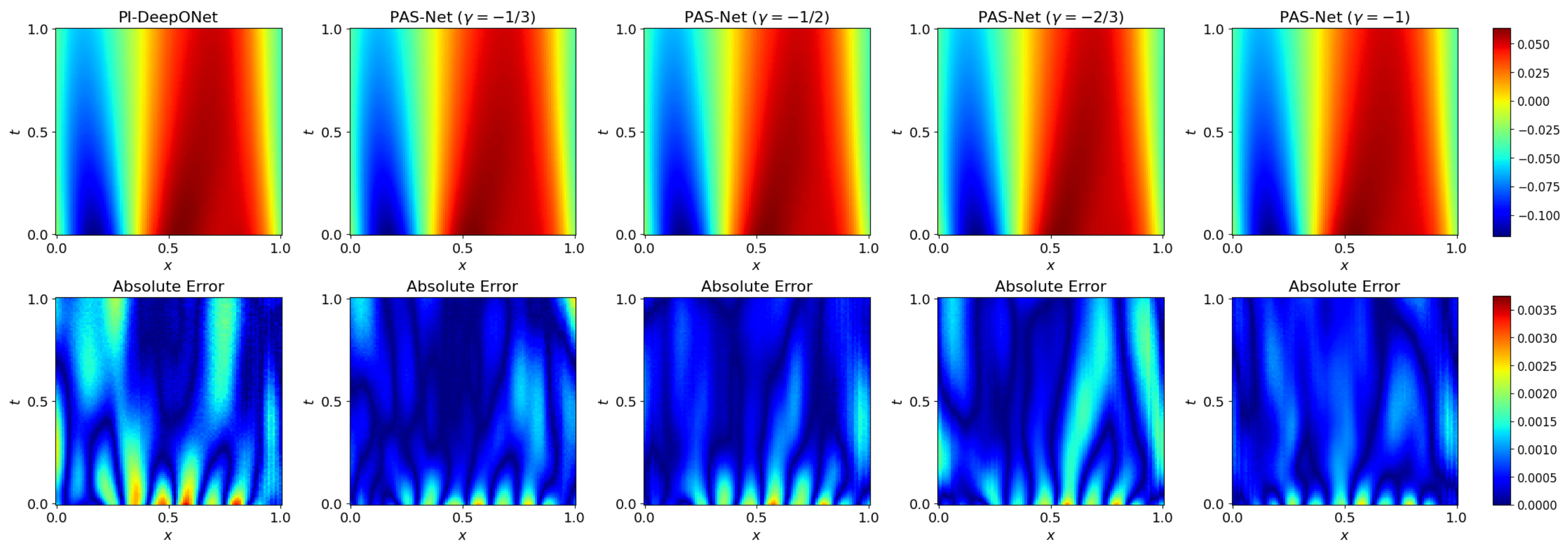}
        \caption{Sample 3}
        \label{fig:burgers-sample-3}
    \end{subfigure} 
    \caption{Spatiotemporal fields of one dimensional Burgers equation with different initial conditions for PI-DeepONet and PAS-Net with different $\gamma$ values. }
    \label{fig:burgers-result}
\end{figure}

% \subsection{Advection equation}
% We also test the PAS-Net in a advection-dominated settings and in particular we consider a linear advection equation with spatially varying coefficients.  
% Such equations often challenge conventional data-driven modeling techniques because of their strong directional transport and limited inherent dissipation.  
% The governing equation yields
% \begin{equation}
%     \frac{\partial u}{\partial t} 
%     + a(x) \frac{\partial u}{\partial x} 
%     = 0, 
%     \qquad (x,t) \in (0,1) \times (0,1),
% \end{equation}
% subject to the initial and boundary conditions
% \begin{align}
%     u(x,0) &= f(x), \\
%     u(0,t) &= g(t),
% \end{align}
% where $f(x) = \sin(\pi x)$ and $g(t) = \sin\!\left(\tfrac{\pi t}{2}\right)$.  
% The advection velocity $a(x)$ is sampled from a Gaussian random field (GRF) with correlation length $l = 0.2$, and is shifted and rescaled to ensure positivity:
% \begin{equation}
%     a(x) = \nu(x) - \min_x \nu(x) + 1.
% \end{equation}
% The learning objective is to approximate the solution operator 
% \[
%     \mathcal{G}: a(x) \mapsto u(x,t),
% \]
% which maps the coefficient field $a(x)$ to the corresponding solution $u(x,t)$.

\subsection{Diffusion-reaction equations}
Our next example for PAS-Net involves a nonlinear diffusion--reaction equation with Dirichlet boundary conditions and a spatial forcing term $f(x)$ \cite{wang2021learning}:
\begin{align}
\frac{\partial u}{\partial t} 
= \nu\frac{\partial^2 u}{\partial x^2} + k u^2 + f(x), 
\quad (x,t) \in (0,1)\times(0,1),    
\end{align}
where $\nu = 0.01$ is the diffusion coefficient and $k = 0.01$ is the reaction rate. 
We assume zero initial condition. 
The objective similar tois to learn the solution operator that maps source terms $f(x)$ to the corresponding PDE solutions $u(x,t)$. 
The training data consist of random realizations of $f(x)$ sampled from a Gaussian random field (GRF), as described in \cite{wang2021learning}.
%while the model’s prediction accuracy is evaluated on unseen realizations that are not used during training.

\paragraph{Training and testing data} %{\color{red} add this}
For each \( u^{(i)} \), the boundary point set \( \{(x_{u,j}^{(i)}, t_{u,j}^{(i)})\}_{j=1}^{P} \) is sampled from the unit square boundary \([0,1] \times [0,1]\) excluding \( t=1 \). The collocation points \( \{(x_{r,j}^{(i)}, t_{r,j}^{(i)})\}_{j=1}^{Q} \) have fixed spatial coordinates \( x_{r,j}^{(i)} = x_j \) with temporal coordinates \( t_{r,j}^{(i)} \) uniformly sampled over \([0,1]\). We use \( P=300 \) and \( Q=100 \) , generating 5,000 training input functions \( u(x) \) from a GRF with length scale \( l=0.2 \). The test dataset contains 100 input functions sampled from the same GRF, where corresponding diffusion-reaction systems are solved using a second-order implicit finite difference method on a \( 100 \times 100 \) equidistant grid. Each test solution is evaluated on this uniform grid.
 % For each \( u^{(i)} \), \( \{(x_{u,j}^{(i)}, t_{u,j}^{(i)})\}_{j=1}^{P} \) are uniformly sampled points from the boundary of \([0, 1] \times [0, 1]\) (excluding \( t = 1 \)), while \( \{(x_{r,j}^{(i)}, t_{r,j}^{(i)})\}_{j=1}^{Q} \) is a set of collocation points satisfying \( x_{r,j}^{(i)} = x_j \), and \( \{t_{r,j}^{(i)}\}_{j=1}^{Q} \) are uniformly sampled in \([0, 1]\).  We set \( P = 300 \), \( Q = 100 \), and randomly sample \( N = 5,000 \) input functions \( u(x) \) from a GRF with length scale \( l = 0.2 \). To generate the test data-set, we sample \( N = 100 \) input functions \( u(x) \) from the same GRF and solve the diffusion-reaction system using a second-order implicit finite difference method on a \( 100 \times 100 \) equi-spaced grid. Hence, the test data-set will contain \( 100 \) realizations evaluated on a \( 100 \times 100 \) uniform grid.

% (There are discrepancies in some parameters between the code implementation and the original paper description: the number of training samples is set to $N = 5,000$ in the code, while the original paper specifies $N = 10,000$; the number of test samples is $N_{\text{test}} = 100$ in the code, compared to $N_{\text{test}} = 1,000$ in the paper; the number of boundary points is $P = 300$ in the code (with 100 points each sampled from the left boundary, right boundary, and initial condition), whereas the paper describes $P = 100$. These differences may be due to the public code not fully adhering to the paper's description.)

\paragraph{Neural network architectures} %{\color{red} add this}
The branch and trunk networks are each equipped by a five-layer fully connected neural network with 50 neurons per hidden layer.
We conduct 120,000 gradient descent iterations to train the models, focusing on minimizing the loss function. 

\paragraph{Error metric} We use the same error metrics \eqref{eq:mean-error}--\eqref{eq:std-error} as in the Burgers equation test case.

Table~\ref{tab:l2-diffusion} reports the mean and standard deviation of the relative $L^2$ errors for the one-dimensional diffusion–reaction equation obtained by DeepONet, PI-DeepONet, and PAS-Net with different parameters.
Similar to the Burgers case, the plain vanilla DeepONet still shows the highest errors within the limited number of training iterations which indicates that it struggles to accurately capture the spatial variability induced by the nonlinear reaction term.
Introducing the physics-informed loss again leads to a substantial reduction in both the mean and variability of the prediction errors in PI-DeepONet, which confirms that incorporating physical constraints improves generalization for reaction–diffusion processes.
Among all different parameter, PAS-Net consistently attains the lowest mean $L^2$ errors with different combinations of $(\epsilon, x_c, \gamma)$. In particular, the settings with $\epsilon=0.1$ and $\gamma=-0.6$ or $-0.7$ yield slightly better accuracy, showing that moderate localization around $x_c=0.7$ improves the learning of spatially varying reaction fronts.
These observations are further supported by Figure~\ref{fig:DR-result} with benchmark solutions in Figure \ref{fig:dre-bench}, which presents the predicted spatiotemporal fields for three representative initial conditions obtained by PI-DeepONet and PAS-Net. The DeepONet results are still omitted for clarity due to their comparatively larger error magnitude. 

It is worth noting that the improvement of PAS-Net compared with PI-DeepONet in this case is less pronounced than that in the Burgers equation. This difference arises because, in advection-dominated regimes, the steep gradients are continuously transported in space, making them more challenging to track and represent accurately. While the adaptive-scale feature enhances local resolution, its time-invariant spatial scaling may not fully capture the dynamic movement of sharp fronts. Nevertheless, PAS-Net still consistently achieve lower errors than PI-DeepONet and remains an efficient framework for learning such advection–diffusion processes.

\begin{figure}[H]
\includegraphics[width=\textwidth]{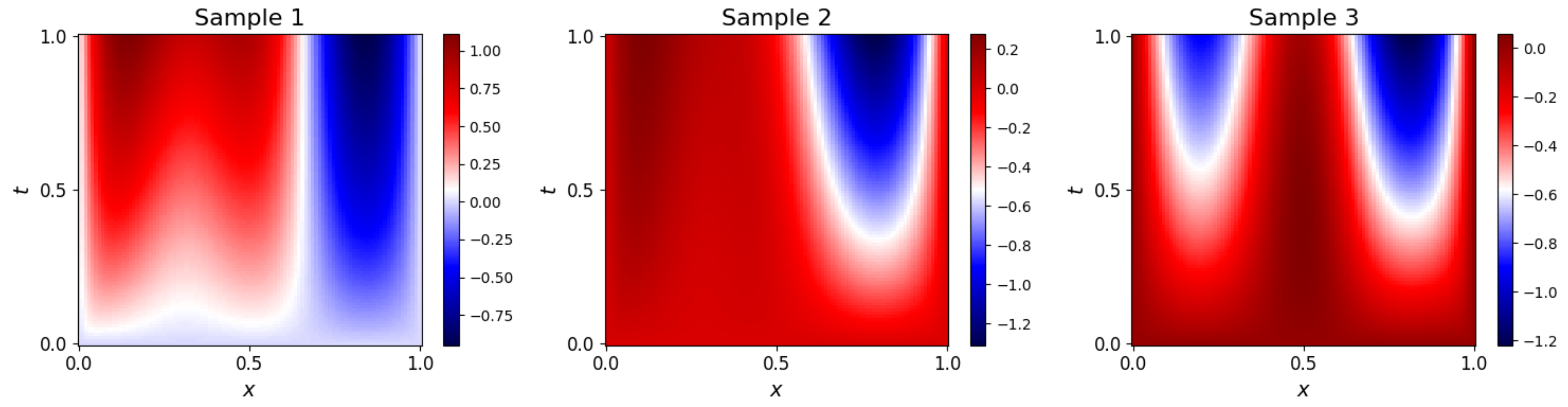}
    \caption{Benchmark solutions of three different samples used in Figure \ref{fig:DR-result} for one-dimensional diffusion-reaction equation. }
    \label{fig:dre-bench}
\end{figure}
% \begin{table}[H]
%     \centering
%     \caption{Mean $L^2$ errors of the Diffusion-reaction equation obtained by DeepONet, PI-DeepONet, and PAS-Net for different parameter configurations.}
%     \label{tab:l2-diffusion}
%     \vspace{2mm}
%     \renewcommand{\arraystretch}{1.2}
%     \begin{tabular}{l|cc|cccc}
%         \toprule
%         & \multicolumn{2}{c|}{\textbf{Baseline Models}} & \multicolumn{4}{c}{\textbf{PAS-Net}} \\
%         \cmidrule(lr){2-3} \cmidrule(lr){4-7}
%         & \textbf{DeepONet} & \textbf{PI-DeepONet} 
%         & $\boldsymbol{0.1^{-0.6}(x-0.70)}$ 
%         & $\boldsymbol{0.1^{-0.5}(x-0.75)}$ 
%         & $\boldsymbol{0.5^{-2}(x-0.75)}$ 
%         & $\boldsymbol{0.1^{-0.7}(x-0.70)}$ \\
%         \midrule
%         Mean $\mathcal{E}_{L^2}$ & $8.00\times 10^{-3}$ & $4.65\times 10^{-3}$ & $4.26\times 10^{-3}$ & $4.52\times 10^{-3}$ & $4.46\times 10^{-3}$ & $4.43\times 10^{-3}$ \\
%         Std $\mathcal{E}_{L^2}$ & $4.08\times 10^{-3}$ & $2.31\times 10^{-3}$ & $2.03\times 10^{-3}$ & $2.18\times 10^{-3}$ & $2.14\times 10^{-3}$ & $2.03\times 10^{-3}$ \\
%         \bottomrule
%     \end{tabular}
% \end{table}
\begin{table}[H]
    \centering
    \caption{Mean $L^2$ errors of the Diffusion-reaction equation obtained by DeepONet, PI-DeepONet, and PAS-Net for different parameters.}
    \label{tab:l2-diffusion}
    \vspace{2mm}
    \renewcommand{\arraystretch}{1.2}
    \begin{tabular}{l|cc|cccc}
        \toprule
        & \multicolumn{2}{c|}{} & \multicolumn{4}{c}{\textbf{PAS-Net}} \\
        \cmidrule(lr){2-3} \cmidrule(lr){4-7}
        & \textbf{DeepONet} & \textbf{PI-DeepONet} 
        &  {\footnotesize$\boldsymbol{\epsilon=0.1,  x_c=0.7}$} & {\footnotesize$\boldsymbol{\epsilon=0.1,  x_c=0.75}$} & {\footnotesize$\boldsymbol{\epsilon=0.5,  x_c=0.75}$} & {\footnotesize$\boldsymbol{\epsilon=0.1, x_c=0.7}$} \\
        & & & {\footnotesize$\boldsymbol{\gamma=-0.6}$} & {\footnotesize$\boldsymbol{\gamma=-0.5}$} & {\footnotesize$\boldsymbol{\gamma=-2.0}$} & {\footnotesize$\boldsymbol{\gamma=-0.7}$} \\
        \midrule
        Mean $\mathcal{E}_{L^2}$ & $8.00\times 10^{-3}$ & $4.65\times 10^{-3}$ & $4.26\times 10^{-3}$ & $4.52\times 10^{-3}$ & $4.46\times 10^{-3}$ & $4.43\times 10^{-3}$ \\
        Std $\sigma_{\mathcal{E}_{L^2}}$ & $4.08\times 10^{-3}$ & $2.31\times 10^{-3}$ & $2.03\times 10^{-3}$ & $2.18\times 10^{-3}$ & $2.14\times 10^{-3}$ & $2.03\times 10^{-3}$ \\
        \bottomrule
    \end{tabular}
\end{table}
\begin{figure}[H]
    \centering
    \begin{subfigure}[b]{\linewidth}
        \centering
        \includegraphics[width=\linewidth]{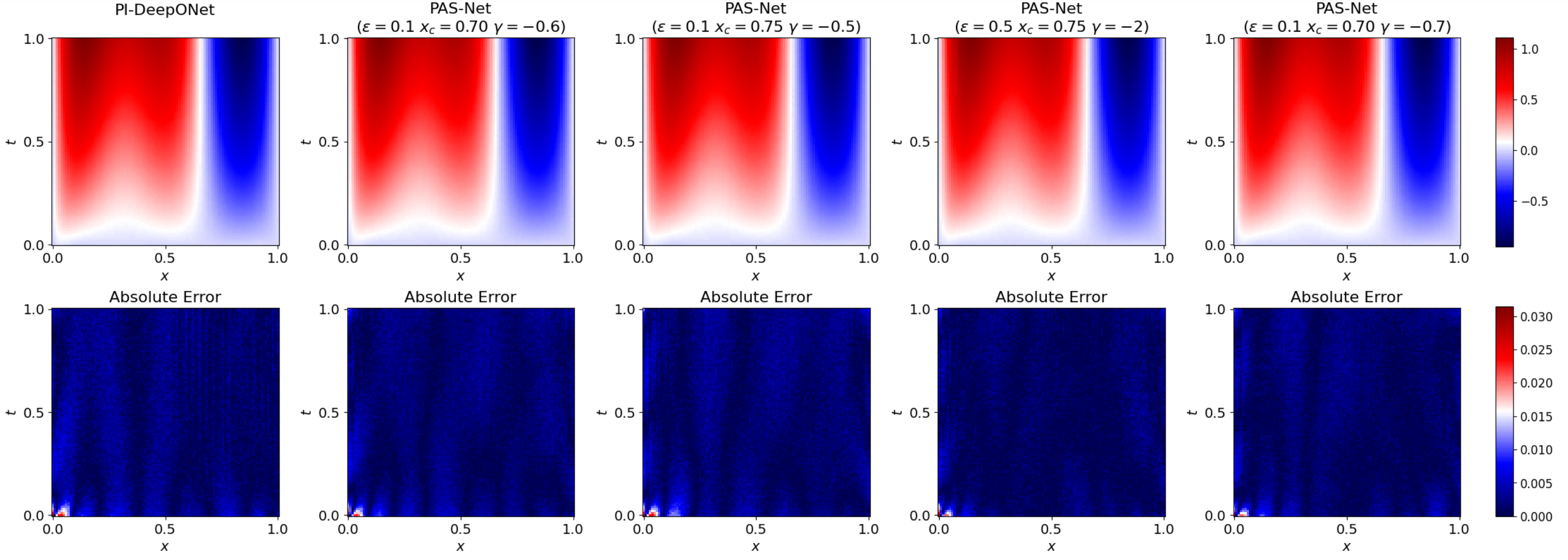}
        \caption{Sample 1}
        \label{fig:DR-sample-1}
    \end{subfigure}
    \\
    \begin{subfigure}[b]{\linewidth}
        \centering
        \includegraphics[width=\linewidth]{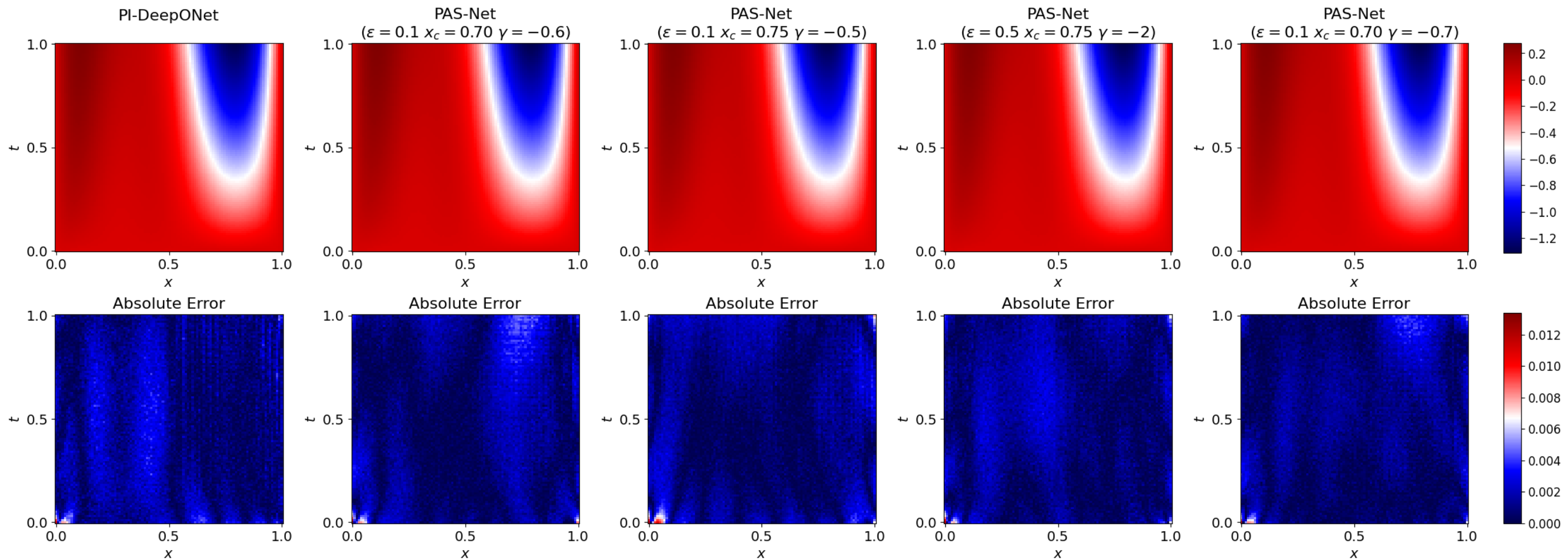}
        \caption{Sample 2}
        \label{fig:DR-sample-2}
    \end{subfigure}  
        \\
    \begin{subfigure}[b]{\linewidth}
        \centering
        \includegraphics[width=\linewidth]{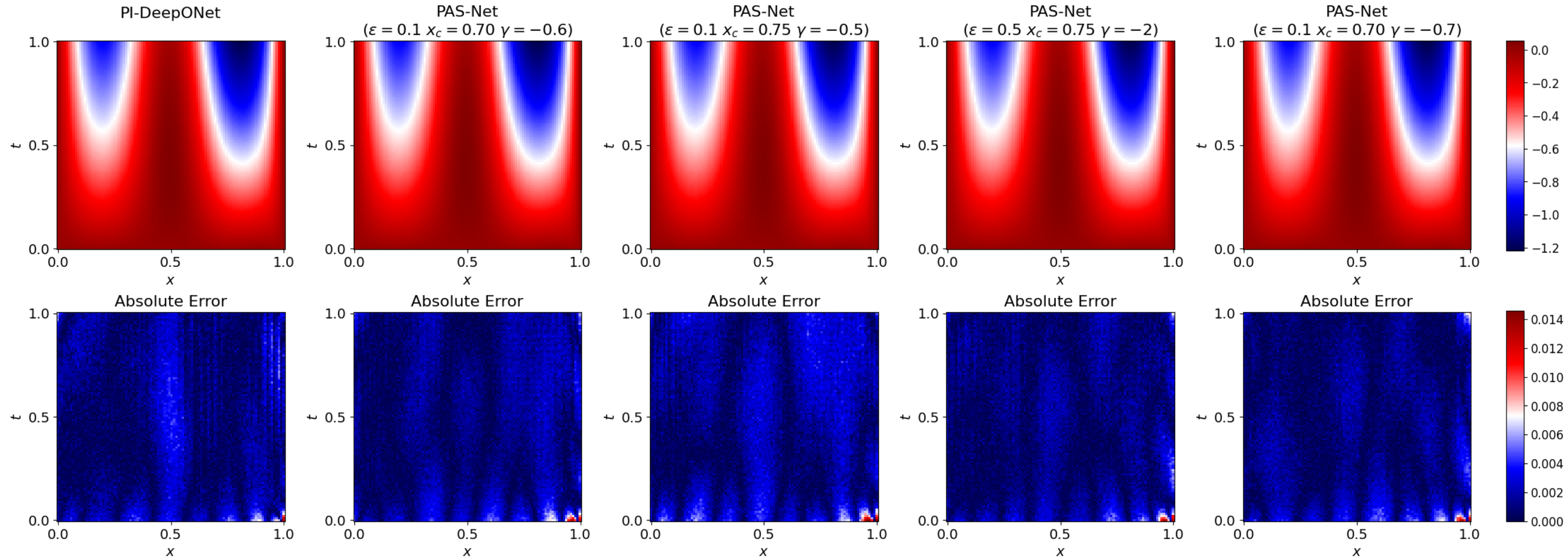}
        \caption{Sample 3}
        \label{fig:DR-sample-3}
    \end{subfigure} 
    \caption{Spatiotemporal fields of one dimensional diffusion-reaction equation with different initial conditions for PI-DeepONet and PAS-Net with different parameters. }
    \label{fig:DR-result}
\end{figure}

\subsection{Eikonal equation}

The final example illustrates the ability of PAS-Net to approximate operators in the application from nonlinear Hamilton–Jacobi–type equations. Specifically, we consider the two-dimensional eikonal equation \cite{wang2021learning}:
\begin{align}
    \|\nabla u(\mathbf{x})\|_2 &= 1, \notag\\
    u(\mathbf{x}) &= 0, \quad \mathbf{x} \in \partial \Omega, \notag
\end{align}
where $\mathbf{x} = (x,y) \in \mathbb{R}^2$ represents the spatial coordinates, and $\Omega = [-2, 2] \times [-2, 2]$ is an open domain with a smooth boundary $\partial \Omega$. 
The solution $u(\mathbf{x})$ corresponds to a signed distance function that measures the shortest distance from any point $\mathbf{x} \in \Omega$ to the boundary $\partial \Omega$, which can be expressed as follows:
\begin{align}
u(\mathbf{x}) =
\begin{cases}
d(\mathbf{x}, \partial \Omega), & \mathbf{x} \in \Omega,\\
-d(\mathbf{x}, \partial \Omega), & \mathbf{x} \in \Omega^c,
\end{cases} \notag
\end{align}
where the distance function ``$d(\mathbf{x}, \partial \Omega)$'' is defined:
\begin{align}
d(\mathbf{x}, \partial \Omega) = \inf_{\mathbf{y} \in \partial \Omega} d(\mathbf{x}, \mathbf{y}). \notag
\end{align}
In this example, different operator learning models are trained to learn the mapping from a closed boundary curve $\Gamma$ to its corresponding signed distance function $u(\mathbf{x})$, which satisfies the eikonal equation above. 
%For numerical evaluation, we employ airfoil geometries from the University of Illinois Urbana–Champaign (UIUC) database, where a subset of samples is used for training and the remaining unseen geometries are reserved for testing.
\paragraph{Training and testing data} %{\color{red} add this}
% To generate a set of training data, we randomly choose \( N = 1000 \) circles with radii sampled from a uniform distribution. Then, for each input circle \(\Gamma^{(i)}\) with radius \( r^{(i)} \), we have \(\{(x_j^{(i)}, y_j^{(i)})\}_{j=1}^Q = \{(r^{(i)} \cos \theta_j, r^{(i)} \sin \theta_j)\}_{j=1}^Q\), where \(\{\theta_j\}_{j=1}^Q\) are evenly spaced in \([0,2\pi]\). Here, we consider a computational domain \( D = [-1,1] \times [-1,1] \) and we set \( m = 100 \), \( Q = 1000 \).
For the construction of the training dataset, 1000 circles are randomly selected with all their radii sampled from a uniform probability distribution. For each input circle \(\Gamma^{(i)}\) with radius \( r^{(i)}\), the corresponding point set \(\{(x_j^{(i)}, y_j^{(i)})\}_{j=1}^Q\) is defined as \(\{(r^{(i)} \cos \theta_j, r^{(i)} \sin \theta_j)\}_{j=1}^Q\), where the angles \(\{\theta_j\}_{j=1}^Q\) follow a uniform distribution within the interval \([0, 2\pi]\). We  set  \( m = 100 \) ,  \( Q = 1000 \) and the computational domain as \( D = [-1,1] \times [-1,1] \) . 

\paragraph{Neural network architectures} %{\color{red} add this}
The branch network is equipped by a five-layer fully connected neural network, and the trunk network six, each hidden layer in both networks contains 50 neurons. We conduct 40,000 gradient descent iterations to train the models, focusing on minimizing the loss function. 

% (In the original text, the computational domain is \( D = [-2, 2] \times [-2, 2] \), while in the code, it is \( D = [-1, 1] \times [-1, 1] \).  In the original text, both the branch network and the trunk network are 6 layers, whereas in the code, the branch network is 5 layers and the trunk network is 6 layers. Finally, the original text uses 80,000 iterations, while the code uses 40,000 iterations.)

\paragraph{Error metric} We use the same error metrics \eqref{eq:mean-error}--\eqref{eq:std-error} as in the Burgers equation test case.

% \begin{table}[H]
%     \centering
%     \caption{Mean $L^2$ errors of the Eikonal equation obtained by DeepONet, PI-DeepONet, and PAS-Net for different parameters.}
%     \label{tab:l2-eikonal}
%     \vspace{2mm}
%     \renewcommand{\arraystretch}{1.2}
%     \begin{tabular}{l|cc|cccc}
%         \toprule
%         & \multicolumn{2}{c|}{} & \multicolumn{4}{c}{\textbf{PAS-Net ($\boldsymbol{\epsilon=0.5}$)}} \\
%         \cmidrule(lr){2-3} \cmidrule(lr){4-7}
%         & \textbf{DeepONet} & \textbf{PI-DeepONet} 
%         &  {\footnotesize$\boldsymbol{x_c, y_c=0.50}$} & {\footnotesize$\boldsymbol{x_c, y_c=0.75}$} & {\footnotesize$\boldsymbol{x_c, y_c=0.50,-0.50}$} & {\footnotesize$\boldsymbol{x_c, y_c=0.75,-0.75}$} \\
%         & & & {\footnotesize$\boldsymbol{\gamma=-1}$} & {\footnotesize$\boldsymbol{\gamma=-1}$} & {\footnotesize$\boldsymbol{\gamma=-2.10}$} & {\footnotesize$\boldsymbol{\gamma=-0.33}$} \\
%         \midrule
%         Mean $\mathcal{E}_{L^2}$ & $9.08\times 10^{-3}$ & $7.54\times 10^{-3}$ & $5.36\times 10^{-3}$ & $4.19\times 10^{-3}$ & $5.28\times 10^{-3}$ & $3.36\times 10^{-3}$ \\
%         Std $\sigma_{\mathcal{E}_{L^2}}$ & $4.26\times 10^{-3}$ & $2.00\times 10^{-3}$ & $1.43\times 10^{-3}$ & $1.27\times 10^{-3}$ & $1.48\times 10^{-3}$ & $8.35\times 10^{-4}$ \\
%         \bottomrule
%     \end{tabular}
% \end{table}

\paragraph{Choice of parameters in PAS-Net}
To test the influence of both the scaling exponent $\gamma$ and the spatial localization of the adaptive centers $(x_c, y_c)$ in PAS-Net in two-dimensional domain, we consider the following four different regimes for the proposed PAS-Net with a fixed scaling parameter $\epsilon=0.5$.
For simplicity, these regimes are referred to as Regime~1, Regime~2, Regime~3, and Regime~4 as in Table \ref{tab:pasnet-params}. 
\begin{table}[H]
    \centering
    \caption{Parameter regimes used in PAS-Net for the two-dimensional Eikonal equation.}
    \label{tab:pasnet-params}
    \vspace{2mm}
    \renewcommand{\arraystretch}{1.2}
    \begin{tabular}{c|c|c|c}
        \toprule
        \textbf{Regime} & \textbf{ $\gamma$} & { $\epsilon$} & \textbf{$(x_c, y_c)$} \\
        \midrule
        1 & $-1$    & $0.5$ & $(0.50,\,0.50)$ \\
        2 & $-1$    & $0.5$ & $(0.75,\,0.75)$ \\
        3 & $-2.10$ & $0.5$ & $(0.50,\,-0.50)$ \\
        4 & $-0.33$ & $0.5$ & $(0.75,\,-0.75)$ \\
        \bottomrule
    \end{tabular}
\end{table}

Table~\ref{tab:l2-eikonal} summarizes the mean and standard deviation of the relative $L^2$ errors for the two-dimensional Eikonal equation obtained using DeepONet, PI-DeepONet, and PAS-Net under the four parameter regimes defined in Table~\ref{tab:pasnet-params}.
Same as preivous two test cases, the plain vanilla DeepONet has the largest errors and PI-DeepONet reduces both the mean and standard deviation errors by approximately $17\%$ and $53\%$ compared with DeepONet, respectively. 
PAS-Net further improves the overall accuracy for all parameter regimes. In particular, Regime~4 ($\gamma=-0.33$, $(x_c,y_c)=(0.75,-0.75)$) has the smallest mean and standard deviation errors, reducing by about $55\%$ and the standard deviation by nearly $60\%$ compared to PI-DeepONet. This improvement implies that adaptive scaling could effectively capture the circular wavefronts and maintain stability in regions of high curvature. The small variation among the four PAS-Net configurations also demonstrates the robustness of PAS-Net for all different parameter choices. This observation is also consistent with Figure~\ref{fig:Eikonal-result} shows three representative sample cases of the spatial fields for the two-dimensional Eikonal equation, with the corresponding benchmark solutions shown in Figure~\ref{fig:Eikonal-bench}. The DeepONet result is omitted for clarity and to maintain a consistent comparison between PI-DeepONet and PAS-Net. For all cases, PAS-Net displays systematically smaller spatial errors than PI-DeepONet, particularly near curved and intersecting wavefronts.

% We evaluate the performance of our proposed PAS-Net method under four distinct parameter regimes, all with $\epsilon=0.5$:

% \begin{enumerate}[label=(\alph*), leftmargin=*]
%     \item \textbf{Regime 1}: $\gamma=-1$ with pair $(x_c, y_c)=(0.50, 0.50)$
%     \item \textbf{Regime 2}: $\gamma=-1$ with pair $(x_c, y_c)=(0.75, 0.75)$
%     \item \textbf{Regime 3}: $\gamma=-2.10$ with two pairs: $(0.50, 0.50)$ and $(-0.50, -0.50)$
%     \item \textbf{Regime 4}: $\gamma=-0.33$ with two pairs: $(0.75, 0.75)$ and $(-0.75, -0.75)$
% \end{enumerate}

% \begin{table}[H]
%     \centering
%     \caption{PAS-Net parameter configurations for different regimes.}
%     \label{tab:pasnet-parameters}
%     \begin{tabular}{lccc}
%         \toprule
%         \textbf{Regime} & $\boldsymbol{\epsilon}$ & $\boldsymbol{\gamma}$ & $\boldsymbol{(x_c, y_c)}$ \\
%         \midrule
%         Regime 1 & 0.5 & -1.00 & $(0.50, 0.50)$ \\
%         Regime 2 & 0.5 & -1.00 & $(0.75, 0.75)$ \\
%         Regime 3 & 0.5 & -2.10 & $(0.50, 0.50)$, $(-0.50, -0.50)$ \\
%         Regime 4 & 0.5 & -0.33 & $(0.75, 0.75)$, $(-0.75, -0.75)$ \\
%         \bottomrule
%     \end{tabular}
% \end{table}
\begin{figure}[H]
\includegraphics[width=\textwidth]{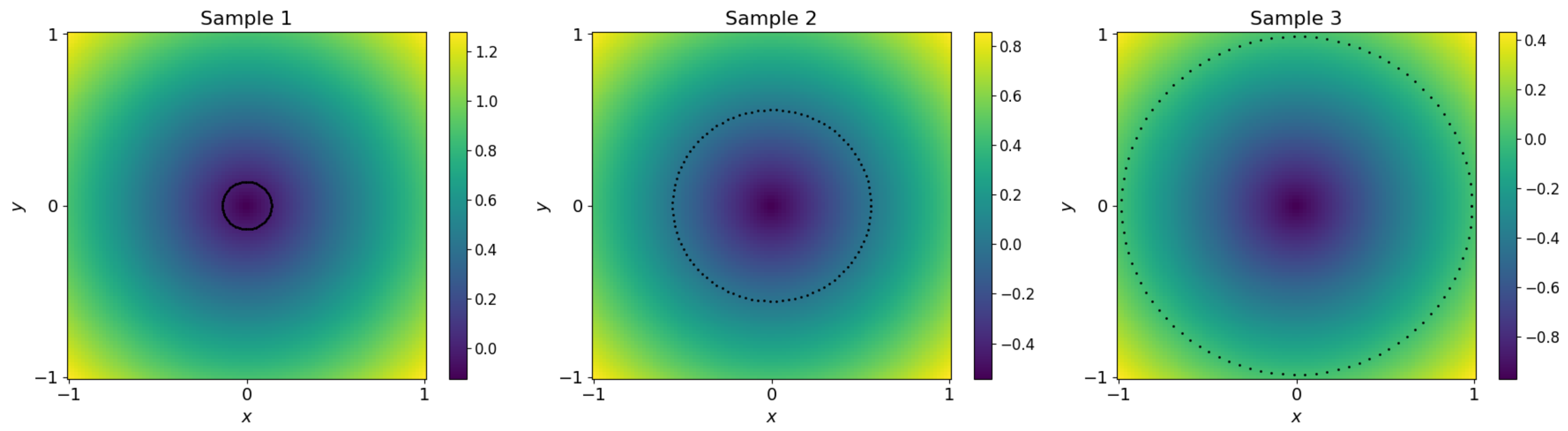}
    \caption{Benchmark solutions of three different samples used in Figure \ref{fig:Eikonal-result} for two-dimensional Eikonal equation. }
    \label{fig:Eikonal-bench}
\end{figure}
\begin{table}[H]
    \centering
    \caption{Mean $L^2$ errors of the Eikonal equation obtained by DeepONet, PI-DeepONet, and PAS-Net for different parameters.}
    \label{tab:l2-eikonal}
    \vspace{2mm}
    \renewcommand{\arraystretch}{1.2}
    \begin{tabular}{l|cc|cccc}
        \toprule
        & & & \multicolumn{4}{c}{\textbf{PAS-Net}} \\
        \cmidrule(lr){2-3} \cmidrule(lr){4-7}
        & \textbf{DeepONet}& \textbf{PI-DeepONet}& \textbf{Regime1} & \textbf{Regime2} & \textbf{Regime3} & \textbf{Regime4} \\
        \midrule
        Mean $\mathcal{E}_{L^2}$ & $1.00\times 10^{0}$ & $7.54\times 10^{-3}$ & $5.36\times 10^{-3}$ & $4.19\times 10^{-3}$ & $5.28\times 10^{-3}$ & $3.36\times 10^{-3}$ \\
        Std $\sigma_{\mathcal{E}_{L^2}}$ & $4.03\times 10^{-5}$ & $2.00\times 10^{-3}$ & $1.43\times 10^{-3}$ & $1.27\times 10^{-3}$ & $1.48\times 10^{-3}$ & $8.35\times 10^{-4}$ \\
        \bottomrule
    \end{tabular}
\end{table}

\begin{figure}[H]
    \centering
    \begin{subfigure}[b]{\linewidth}
        \centering
        \includegraphics[width=\linewidth]{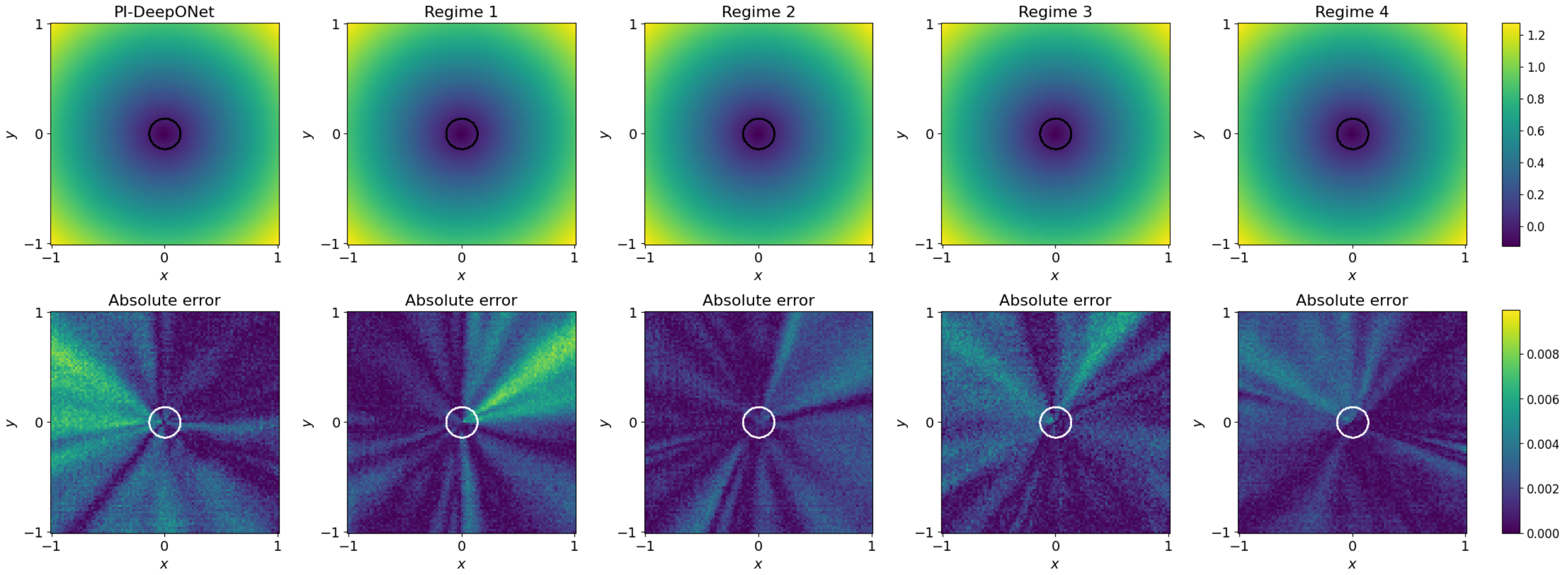}
        \caption{Sample 1}
        \label{fig:Eikonal-sample-1}
    \end{subfigure}
    \\
    \begin{subfigure}[b]{\linewidth}
        \centering
        \includegraphics[width=\linewidth]{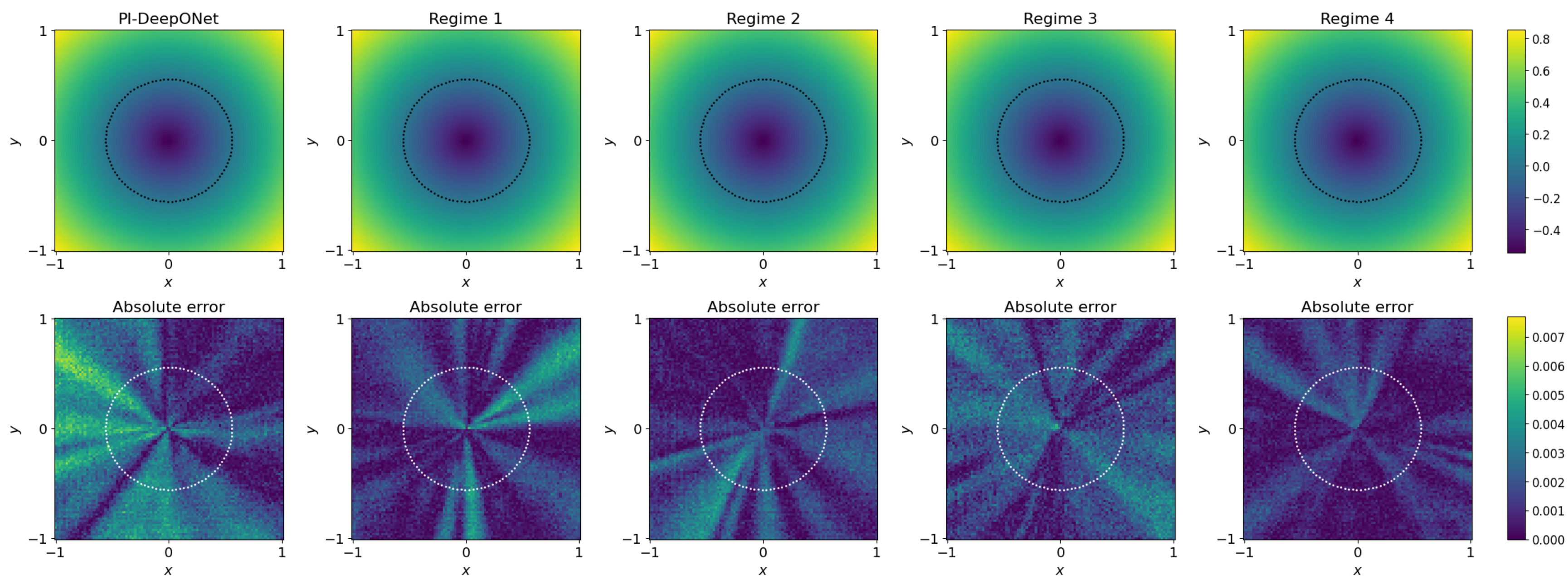}
        \caption{Sample 2}
        \label{fig:Eikonal-sample-2}
    \end{subfigure}  
        \\
    \begin{subfigure}[b]{\linewidth}
        \centering
        \includegraphics[width=\linewidth]{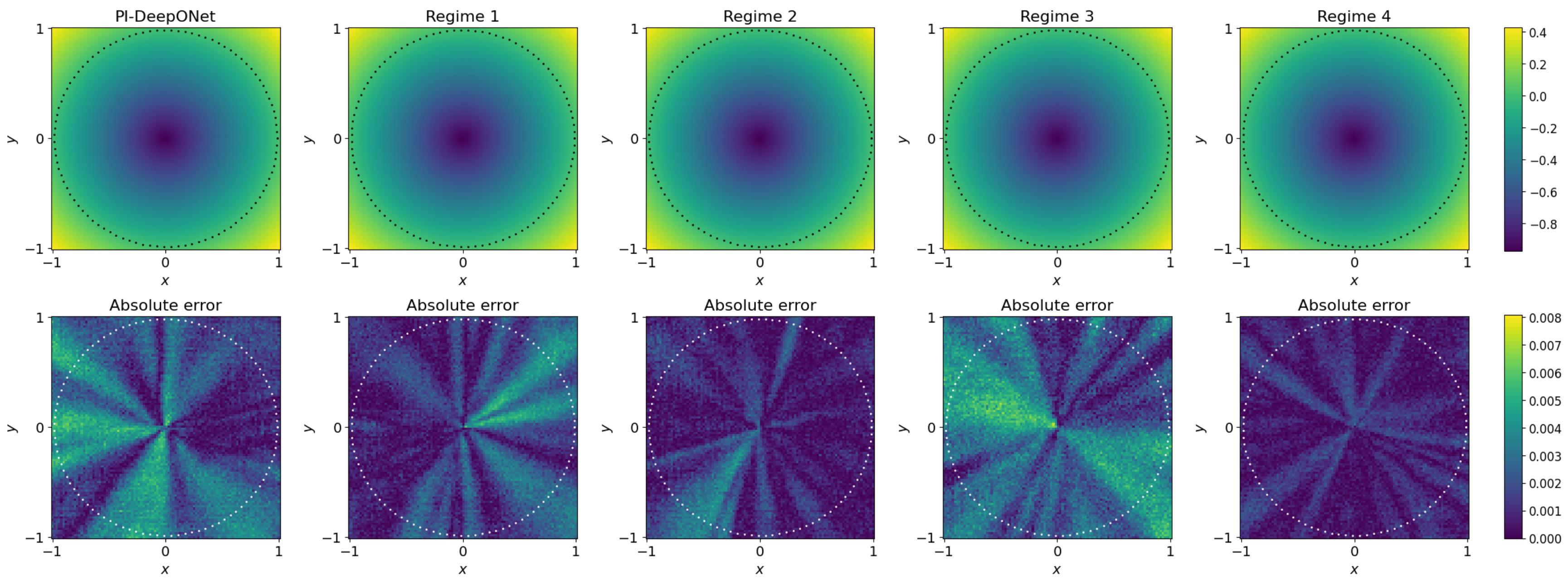}
        \caption{Sample 3}
        \label{fig:Eikonal-sample-3}
    \end{subfigure} 
    \caption{Spatial fields of Eikonal equation with different initial conditions for PI-DeepONet and PAS-Net with different parameter regimes. }
    \label{fig:Eikonal-result}
\end{figure}

\section{Conclusion and Future Work \label{sec:conclusion}}
We proposed the \emph{Physics-Informed Adaptive-Scale Deep Operator Network} (PAS-Net), a framework designed to improve operator learning for nonlinear and singularly perturbed PDEs with localized and multiscale features.  
By augmenting the trunk input of PI-DeepONet with locally rescaled coordinate embeddings, PAS-Net introduces an intrinsic multiscale representation that enhances expressivity and training stability.  
Theoretical analysis based on the Neural Tangent Kernel (NTK) shows that this adaptive embedding improves spectral conditioning and accelerates convergence.  
In this paper, numerical tests on Burgers, advection, and diffusion–reaction equations show that PAS-Net achieves substantially more accurate than the standard DeepONet and PI-DeepONet models, especially in regimes characterized by steep gradients or stiff dynamics. 

Future work will proceed along several directions.  
First, PAS-Net can be extended to physics-informed inverse modeling, where it may be integrated with advanced operator-learning frameworks, e.g., Fourier Neural Operators, Neural U-Nets, or Transformer-based architectures, to leverage complementary spectral and attention-based representations for recovering unknown parameters or source terms in complex PDE systems \cite{lin2021multi,kharazmi2021identifiability,lu2025evolutionary,xu2024modeling}.
Second, a more comprehensive numerical and theoretical analysis in the continuous-operator setting will be pursued to better understand convergence, generalization, and approximation properties under adaptive scaling.  
Third, PAS-Net will be extended to more challenging multiscale systems, including incompressible Navier–Stokes and Boussinesq equations \cite{mou2021data,zhang2022convergence,zhang2021fast,zhang2021effects}, where nonlinear coupling and turbulence provide more demanding tests for physics-informed operator learning.  
Finally, we plan to explore applications of PAS-Net to \emph{interface problems} in PDEs \cite{guo2021error,adjerid2020error,guo2021construct}, where discontinuities and jump conditions across heterogeneous media require improved local adaptivity and robust scale handling.
%%%%%%%%%%%%%%%%%%%%%%%%%%%%%%%%%%%%%%%%%%%%%%%%%%

\section*{Acknowledgements}
Y.Z is grateful to acknowledge the support of the National Natural Science Foundation of China (NSFC) under Grants No. 12401562 , No. 12571459, and No. 12241103.

%%%%%%%%%%%%%%%%%%%%%%%%%%%%%%%%%%%%%%%%%%%%%%%%%%
\section*{Data and Code Availability}
All datasets and implementation codes supporting the findings of this study are available in the corresponding GitHub repository:  
\url{https://github.com/zhuxuewen-rgb/PAS-NET}.  
The repository includes the trained models, numerical solvers, and scripts used to reproduce all results presented in this paper.

%%%%%%%%%%%%%%%%%%%%%%%%%%%%%%%%%%%%%%%%%%%%%%%%%%

\bibliographystyle{elsarticle-num}
\bibliography{MsNN,nn,intro-future}

@misc{driscoll2014chebfun,
  title={Chebfun guide},
  author={Driscoll, Tobin A and Hale, Nicholas and Trefethen, Lloyd N},
  year={2014},
  publisher={Pafnuty Publications, Oxford}
}

@article{cox2002exponential,
  title={Exponential time differencing for stiff systems},
  author={Cox, Steven M and Matthews, Paul C},
  journal={Journal of Computational Physics},
  volume={176},
  number={2},
  pages={430--455},
  year={2002},
  publisher={Elsevier}
}

@article {LiXuZhang20,
	AUTHOR = {Li, Xi-An and Xu, Zhi-Qin John and Zhang, Lei},
	TITLE = {A multi-scale {DNN} algorithm for nonlinear elliptic equations
	with multiple scales},
	JOURNAL = {Commun. Comput. Phys.},
	FJOURNAL = {Communications in Computational Physics},
	VOLUME = {28},
	YEAR = {2020},
	NUMBER = {5},
	PAGES = {1886--1906},
	ISSN = {1815-2406,1991-7120},
	MRCLASS = {65N30 (35J66 68T07)},
	MRNUMBER = {4188524},
	MRREVIEWER = {Ahmed\ Al-Taweel},
	DOI = {10.4208/cicp.oa-2020-0187},
}

@article {LiXuZhang23,
	AUTHOR = {Li, Xi-An and Xu, Zhi-Qin John and Zhang, Lei},
	TITLE = {Subspace decomposition based {DNN} algorithm for elliptic type
	multi-scale {PDE}s},
	JOURNAL = {J. Comput. Phys.},
	FJOURNAL = {Journal of Computational Physics},
	VOLUME = {488},
	YEAR = {2023},
	PAGES = {Paper No. 112242, 17},
	ISSN = {0021-9991,1090-2716},
	MRCLASS = {65M99 (68T07)},
	MRNUMBER = {4595926},
	DOI = {10.1016/j.jcp.2023.112242},
}

@book{efendiev2009multiscale,
  title={Multiscale finite element methods: theory and applications},
  author={Efendiev, Yalchin and Hou, Thomas Y},
  volume={4},
  year={2009},
  publisher={Springer Science \& Business Media}
}

@article{hughes1998variational,
  title={The variational multiscale method—a paradigm for computational mechanics},
  author={Hughes, Thomas JR and Feij{\'o}o, Gonzalo R and Mazzei, Luca and Quincy, Jean-Baptiste},
  journal={Computer methods in applied mechanics and engineering},
  volume={166},
  number={1-2},
  pages={3--24},
  year={1998},
  publisher={Elsevier}
}

@article{lennon2023scientific,
  title={Scientific machine learning for modeling and simulating complex fluids},
  author={Lennon, Kyle R and McKinley, Gareth H and Swan, James W},
  journal={Proceedings of the National Academy of Sciences},
  volume={120},
  number={27},
  pages={e2304669120},
  year={2023},
  publisher={National Academy of Sciences}
}

@article{dou2023machine,
  title={Machine learning methods for small data challenges in molecular science},
  author={Dou, Bozheng and Zhu, Zailiang and Merkurjev, Ekaterina and Ke, Lu and Chen, Long and Jiang, Jian and Zhu, Yueying and Liu, Jie and Zhang, Bengong and Wei, Guo-Wei},
  journal={Chemical Reviews},
  volume={123},
  number={13},
  pages={8736--8780},
  year={2023},
  publisher={ACS Publications}
}

@article{sanderse2024scientific,
  title={Scientific machine learning for closure models in multiscale problems: A review},
  author={Sanderse, Benjamin and Stinis, Panos and Maulik, Romit and Ahmed, Shady E},
  journal={arXiv preprint arXiv:2403.02913},
  year={2024}
}

@inproceedings{bartolucci2023representation,
  title     = {Representation Equivalent Neural Operators: A Framework for Neural Operator Approximation},
  author    = {Bartolucci, Francesco and Kovachki, Nikola B. and Stuart, Andrew M.},
  booktitle = {Advances in Neural Information Processing Systems (NeurIPS)},
  year      = {2023},
  url       = {https://openreview.net/forum?id=7LSEkvEGCM}
}

@article{roy2025physics,
  title   = {A Physics-informed Multi-resolution Neural Operator},
  author  = {Roy, Subhajit and Bhatnagar, Rohan and Karniadakis, George E.},
  journal = {arXiv preprint arXiv:2510.23810},
  year    = {2025},
  url     = {https://arxiv.org/html/2510.23810v1}
}

@article{urban2025unveiling,
  title={Unveiling the optimization process of physics informed neural networks: How accurate and competitive can PINNs be?},
  author={Urb{\'a}n, Jorge F and Stefanou, Petros and Pons, Jos{\'e} A},
  journal={Journal of Computational Physics},
  volume={523},
  pages={113656},
  year={2025},
  publisher={Elsevier}
}

@article{gu2021selectnet,
  title={Selectnet: Self-paced learning for high-dimensional partial differential equations},
  author={Gu, Yiqi and Yang, Haizhao and Zhou, Chao},
  journal={Journal of Computational Physics},
  volume={441},
  pages={110444},
  year={2021},
  publisher={Elsevier}
}

@article{hou1997multiscale,
  title={A multiscale finite element method for elliptic problems in composite materials and porous media},
  author={Hou, Thomas Y and Wu, Xiao-Hui},
  journal={Journal of computational physics},
  volume={134},
  number={1},
  pages={169--189},
  year={1997},
  publisher={Elsevier}
}

@book{weinan2011principles,
  title={Principles of multiscale modeling},
  author={Weinan, E},
  year={2011},
  publisher={Cambridge University Press}
}

@article{bietti2019inductive,
  title={On the inductive bias of neural tangent kernels},
  author={Bietti, Alberto and Mairal, Julien},
  journal={Advances in Neural Information Processing Systems},
  volume={32},
  year={2019}
}

@article{zhang2022convergence,
  title={Convergence to precipitating quasi-geostrophic equations with phase changes: asymptotics and numerical assessment},
  author={Zhang, Yeyu and Smith, Leslie M and Stechmann, Samuel N},
  journal={Philosophical Transactions of the Royal Society A},
  volume={380},
  number={2226},
  pages={20210030},
  year={2022},
  publisher={The Royal Society}
}

@article{zhang2021fast,
  title={Fast-wave averaging with phase changes: asymptotics and application to moist atmospheric dynamics},
  author={Zhang, Yeyu and Smith, Leslie M and Stechmann, Samuel N},
  journal={Journal of Nonlinear Science},
  volume={31},
  number={2},
  pages={38},
  year={2021},
  publisher={Springer}
}

@article{zhang2021effects,
  title={Effects of clouds and phase changes on fast-wave averaging: a numerical assessment},
  author={Zhang, Yeyu and Smith, Leslie M and Stechmann, Samuel N},
  journal={Journal of Fluid Mechanics},
  volume={920},
  pages={A49},
  year={2021},
  publisher={Cambridge University Press}
}

@article{guo2021error,
  title={Error analysis of symmetric linear/bilinear partially penalized immersed finite element methods for Helmholtz interface problems},
  author={Guo, Ruchi and Lin, Tao and Lin, Yanping and Zhuang, Qiao},
  journal={Journal of Computational and Applied Mathematics},
  volume={390},
  pages={113378},
  year={2021},
  publisher={Elsevier}
}

@article{adjerid2020error,
  title={Error estimates for an immersed finite element method for second order hyperbolic equations in inhomogeneous media},
  author={Adjerid, Slimane and Lin, Tao and Zhuang, Qiao},
  journal={Journal of Scientific Computing},
  volume={84},
  number={2},
  pages={35},
  year={2020},
  publisher={Springer}
}

@article{lin2021multi,
  title={Multi-variance replica exchange stochastic gradient MCMC for inverse and forward Bayesian physics-informed neural network},
  author={Lin, Guang and Wang, Yating and Zhang, Zecheng},
  journal={arXiv preprint arXiv:2107.06330},
  year={2021}
}

@article{lu2025evolutionary,
  title={An Evolutionary Multi-objective Optimization for Replica-Exchange-based Physics-informed Operator Learning Network},
  author={Lu, Binghang and Mou, Changhong and Lin, Guang},
  journal={arXiv preprint arXiv:2509.00663},
  year={2025}
}

@article{jin1999efficient,
  title={Efficient asymptotic-preserving (AP) schemes for some multiscale kinetic equations},
  author={Jin, Shi},
  journal={SIAM Journal on Scientific Computing},
  volume={21},
  number={2},
  pages={441--454},
  year={1999},
  publisher={SIAM}
}

@article{chen1995universal,
  title={Universal approximation to nonlinear operators by neural networks with arbitrary activation functions and its application to dynamical systems},
  author={Chen, Tianping and Chen, Hong},
  journal={IEEE Transactions on Neural Networks},
  volume={6},
  number={4},
  pages={911--917},
  year={1995},
  publisher={IEEE}
}

@article{wang2021learning,
  title={Learning the solution operator of parametric partial differential equations with physics-informed DeepONets},
  author={Wang, Sifan and Wang, Hanwen and Perdikaris, Paris},
  journal={Science advances},
  volume={7},
  number={40},
  pages={eabi8605},
  year={2021},
  publisher={American Association for the Advancement of Science}
}

@article{jacot2018neural,
  title={Neural tangent kernel: Convergence and generalization in neural networks},
  author={Jacot, Arthur and Gabriel, Franck and Hongler, Cl{\'e}ment},
  journal={Advances in Neural Information Processing Systems},
  volume={31},
  year={2018}
}

@article{raissi2019physics,
  title={Physics-informed neural networks: A deep learning framework for solving forward and inverse problems involving nonlinear partial differential equations},
  author={Raissi, Maziar and Perdikaris, Paris and Karniadakis, George E},
  journal={Journal of Computational Physics},
  volume={378},
  pages={686--707},
  year={2019},
  publisher={Elsevier}
}

@article{guo2021construct,
  title={Construct deep neural networks based on direct sampling methods for solving electrical impedance tomography},
  author={Guo, Ruchi and Jiang, Jiahua},
  journal={SIAM Journal on Scientific Computing},
  volume={43},
  number={3},
  pages={B678--B711},
  year={2021},
  publisher={SIAM}
}

@article{kharazmi2021identifiability,
  title={Identifiability and predictability of integer-and fractional-order epidemiological models using physics-informed neural networks},
  author={Kharazmi, Ehsan and Cai, Min and Zheng, Xiaoning and Zhang, Zhen and Lin, Guang and Karniadakis, George Em},
  journal={Nature Computational Science},
  volume={1},
  number={11},
  pages={744--753},
  year={2021},
  publisher={Nature Publishing Group US New York}
}

@article{cai2021physics,
  title={Physics-informed neural networks (PINNs) for fluid mechanics: A review},
  author={Cai, Shengze and Mao, Zhiping and Wang, Zhicheng and Yin, Minglang and Karniadakis, George Em},
  journal={Acta Mechanica Sinica},
  volume={37},
  number={12},
  pages={1727--1738},
  year={2021},
  publisher={Springer}
}

@article{cuomo2022scientific,
  title={Scientific machine learning through physics--informed neural networks: Where we are and what’s next},
  author={Cuomo, Salvatore and Di Cola, Vincenzo Schiano and Giampaolo, Fabio and Rozza, Gianluigi and Raissi, Maziar and Piccialli, Francesco},
  journal={Journal of Scientific Computing},
  volume={92},
  number={3},
  pages={88},
  year={2022},
  publisher={Springer}
}

@book{evans2022partial,
  title={Partial differential equations},
  author={Evans, Lawrence C},
  volume={19},
  year={2022},
  publisher={American mathematical society}
}

@article{lu2021learning,
  title={Learning nonlinear operators via DeepONet based on the universal approximation theorem of operators},
  author={Lu, Lu and Jin, Pengzhan and Pang, Guofei and Zhang, Zhongqiang and Karniadakis, George Em},
  journal={Nature machine intelligence},
  volume={3},
  number={3},
  pages={218--229},
  year={2021},
  publisher={Nature Publishing Group UK London}
}

@article{lu2025mopinnenkf,
  title={MoPINNEnKF: Iterative Model Inference using generic-PINN-based ensemble Kalman filter},
  author={Lu, Binghang and Mou, Changhong and Lin, Guang},
  journal={arXiv preprint arXiv:2506.00731},
  year={2025}
}

@article{mou2021data,
  title={Data-driven variational multiscale reduced order models},
  author={Mou, Changhong and Koc, Birgul and San, Omer and Rebholz, Leo G and Iliescu, Traian},
  journal={Computer Methods in Applied Mechanics and Engineering},
  volume={373},
  pages={113470},
  year={2021},
  publisher={Elsevier}
}

@inproceedings{li2021fourier,
  title     = {Fourier Neural Operator for Parametric Partial Differential Equations},
  author    = {Li, Zongyi and Kovachki, Nikola Borislavov and Azizzadenesheli, Kamyar and Liu, Burigede and Bhattacharya, Kaushik and Stuart, Andrew M. and Anandkumar, Anima},
  booktitle = {Proceedings of the International Conference on Learning Representations (ICLR)},
  year      = {2021},
  url       = {https://openreview.net/forum?id=c8P9NQjGnd1}
}

@inproceedings{jacot2018ntk,
  title     = {Neural Tangent Kernel: Convergence and Generalization in Neural Networks},
  author    = {Jacot, Arthur and Gabriel, Franck and Hongler, Cl{\'e}ment},
  booktitle = {Advances in Neural Information Processing Systems (NeurIPS)},
  year      = {2018},
  pages     = {8571--8580}
}

@inproceedings{lee2019wide,
  title     = {Wide Neural Networks of Any Depth Evolve as Linear Models Under Gradient Descent},
  author    = {Lee, Jaehoon and Xiao, Lechao and Schoenholz, Samuel S. and Bahri, Yasaman and Sohl-Dickstein, Jascha and Pennington, Jeffrey},
  booktitle = {Advances in Neural Information Processing Systems (NeurIPS)},
  year      = {2019},
  pages     = {8570--8581}
}

@inproceedings{arora2019finegrained,
  title     = {Fine-Grained Analysis of Optimization and Generalization for Overparameterized Two-Layer Neural Networks},
  author    = {Arora, Sanjeev and Du, Simon S. and Hu, Wei and Li, Zhiyuan and Salakhutdinov, Ruslan and Wang, Ruosong},
  booktitle = {Proceedings of the 36th International Conference on Machine Learning (ICML)},
  year      = {2019},
  pages     = {322--332}
}

@article{chizat2019lazy,
  title     = {On Lazy Training in Differentiable Programming},
  author    = {Chizat, Lenaic and Bach, Francis},
  journal   = {Journal of Machine Learning Research},
  volume    = {20},
  number    = {77},
  pages     = {1--39},
  year      = {2019}
}

@article{zhuang2024two,
  author = {Zhuang, Qiao and Yao, Chris Ziyi and Zhang, Zhongqiang and Karniadakis, George Em},
title = {Two-Scale Neural Networks for Partial Differential Equations with Small Parameters},
journal = {Communications in Computational Physics},
year = {2025},
volume = {38},
number = {3},
pages = {603--629},
issn = {1991-7120},
doi = {https://doi.org/10.4208/cicp.OA-2024-0040}}

@book{goering1984singularly,
  title={Singularly perturbed differential equations},
  author={Goering, Herbert and Felgenhauer, Andreas and Lube, Gert and Roos, Hans-G{\"o}rg and Tobiska, Lutz},
  volume={13},
  year={1984},
  publisher={Walter de Gruyter GmbH \& Co KG}
}

@article{kadalbajoo2010brief,
  title={A brief survey on numerical methods for solving singularly perturbed problems},
  author={Kadalbajoo, Mohan K and Gupta, Vikas},
  journal={Applied mathematics and computation},
  volume={217},
  number={8},
  pages={3641--3716},
  year={2010},
  publisher={Elsevier}
}

@article{xu2024modeling,
  title={Modeling unknown stochastic dynamical system via autoencoder},
  author={Xu, Zhongshu and Chen, Yuan and Chen, Qifan and Xiu, Dongbin},
  journal={Journal of Machine Learning for Modeling and Computing},
  volume={5},
  number={3},
  year={2024},
  publisher={Begel House Inc.}
}

@article{chen2024learning,
  title={Learning stochastic dynamical system via flow map operator},
  author={Chen, Yuan and Xiu, Dongbin},
  journal={Journal of Computational Physics},
  volume={508},
  pages={112984},
  year={2024},
  publisher={Elsevier}
}

\end{document}